\newcommand{\argmax}{\operatornamewithlimits{argmax}}
\newtheorem{theorem}{Theorem}
\newcommand\xqed[1]{%
  \leavevmode\unskip\penalty9999 \hbox{}\nobreak\hfill
  \quad\hbox{#1}}
\newcommand\exend{\xqed{$\blacksquare$}}
\tikzset{
    >=stealth',
    shorten >=1pt,
    auto,
    thick,
    main node/.style={circle,draw,font=\sffamily\scriptsize\bfseries},
    simple node/.style={circle,draw,font=\sffamily\small},
    rectangle node/.style={rectangle,thick,font=\sffamily\small, draw=black, minimum height=0.5cm, minimum width=1.5cm},
    node distance=1.4cm,
    root node/.style={diamond, draw,font=\sffamily\large\bfseries},
    chain node/.style={rectangle, draw,font=\sffamily\large\bfseries},    
}
\begin{document}
\setlength{\textfloatsep}{5pt}
\setlength{\floatsep}{0pt}
\setlength{\intextsep}{1\baselineskip}

\title{An Effective Marketing Strategy for Revenue Maximization with a Quantity Constraint}

\author{%
{Ya-Wen Teng{\small $~^{\#1}$},  Chih-Hua Tai{\small $~^{*2}$}, Philip S. Yu{\small $~^{\dag3}$}, Ming-Syan Chen{\small $~^{\#4}$} }%
\vspace{1mm}\\
\fontsize{10}{10}\selectfont\itshape
$^{\#}$\,EE Dept., National Taiwan University, Taipei, Taiwan\\
\fontsize{9}{9}\selectfont\ttfamily\upshape
%
$^{1}$\,ywteng@arbor.ee.ntu.edu.tw, 
$^{4}$\,mschen@cc.ee.ntu.edu.tw%
\vspace{0mm}\\
\fontsize{10}{10}\selectfont\rmfamily\itshape
$^{*}$\,CSIE Dept., National Taipei University, New Taipei, Taiwan\\
\fontsize{9}{9}\selectfont\ttfamily\upshape
$^{2}$\,hanatai@mail.ntpu.edu.tw
\vspace{0mm}\\
\fontsize{10}{10}\selectfont\rmfamily\itshape
$^{\dag}$\,CS Dept., University of Illinois at Chicago, Chicago, IL 60607\\
\fontsize{9}{9}\selectfont\ttfamily\upshape
$^{3}$\,psyu@uic.edu
}

\maketitle
\begin{abstract}
Recently the influence maximization problem has received much attention for its applications on viral marketing and product promotions.
However, such influence maximization problems have not taken into account the 
monetary effect on the purchasing decision of
individuals. To fulfill this gap, in this paper, we aim for
maximizing the revenue by considering the quantity constraint on the promoted commodity. For
this problem, we not only identify a proper small group of
individuals as seeds for promotion but also determine the pricing of the commodity.
To tackle the revenue maximization problem, we first introduce a
strategic searching algorithm, referred to as Algorithm PRUB, which
is able to derive the optimal solutions. After that, we further
modify PRUB to propose a heuristic, Algorithm PRUB+IF, for obtaining feasible solutions more efficiently on larger instances. 
Experiments on real social networks with different valuation distributions
demonstrate the effectiveness of PRUB and PRUB+IF. 

\end{abstract}

\vspace{-2mm}
\section{Introduction}
\label{Sec:Intro}

Due to the advance of the Web 2.0 techniques,
various kinds of websites have become parts of our life. 
Most people nowadays are used to sharing, seeking and obtaining information, 
and interacting with others through various social networking websites such as Digg, Slashdot, Facebook, and Twitter, to name a few.
As more and more people join these social websites, 
the phenomenon of immense quantity of
information flow and influence spread becomes prominent. 
This then motivates the utilization of this phenomenon, known as the ``word-of-mouth''
effect, to bring significant potential benefits in many types of business such as viral marketing and innovation promotion. 
For example, a company can effectively promote its
brand and new products by involving a group of influential people spreading the good words over social websites.  
Hence, in very recent
years, there is active research \cite{Chen2010, Domingos2001, Kempe2003, Singer2012} exploring the problem of influence maximization. 

Existing works studied the influence maximization problem generally based on two kinds of diffusion models, 
the Linear Threshold (LT) and the Independent Cascade (IC) models. 
In the LT model, the influence strength between a pair of individuals is modeled as a fixed weight, 
and the influence on an individual is described as cumulative. 
An individual is activated when the sum of the weights propagated from his/her activated neighbors exceeds his/her own threshold. 
The IC model assumes that each influence of one to another is an independent chance to activate the latter with a probability. 
On both models, the problem of influence maximization aims at identifying a proper group (of a specified size) 
of individuals as seeds to have the number of activated people maximized.  
However, note that such influence maximization problems have not taken into account the monetary aspect, which usually plays a crucial factor in people's purchase decisions.
That is, maximizing the influence spread does not necessarily bring maximum revenue, as 
an individual may be very interested in a new smartphone, 
but does not purchase the smartphone due to its high price.

According to the studies in the fields of social psychology \cite{Asch1951, Easley2010} and economics \cite{Fromlet2012, Katz1985, Sundararajan2007}, one's valuation towards a commodity will be positively influenced by those that have purchased the commodity, which is known as the herd mentality or positive network externalities. In other words, people tend to imitate others in their consumer behaviors, which will result in irrational valuations of the commodity, i.e., higher valuations beyond their inherent valuations.
Based on these phenomena, Mirrokni et al.\ \cite{Mirrokni2012} addressed the \textbf{Revenue Maximization} problem by incorporating the monetary concept into the maximization problem of influence spread based on the LT model. In the monetary LT model, 
the influence strength (or said weight) is transformed into the valuation concept, and the individual threshold is regarded as the inherent valuation.
One's valuation of a commodity is thus the sum of his/her inherent valuation and the valuation increment from social influences.
An individual then pays for a commodity and propagates his/her influence whenever her/his valuation is larger than or equal to the pricing of the commodity.   
Consider the network shown in Figure \ref{Fig:ex} as an example, where the number contained in each node is the inherent valuation and the number on an edge from one node to another is the weight, i.e., the influence strength. 
For the ease of illustration, without loss of generality, the function $F(x)=x$ is used to transform the effects of the social influences cumulated on an individual into the valuation concept in this example. If $f$ purchases the commodity, $c$'s valuation is increased to $\$3+F(1)=\$4$. If $a$, $e$, and $f$ all purchase the commodity, the valuation increment for $c$ is $F(3+2+1)=\$6$ and $c$'s valuation thus becomes $\$3+\$6=\$9$.
The objective of the revenue maximization problem thus aims at determining the pricing of the commodity and identifying a seed group of individuals for promotion so that the total revenue is maximized.

However, to the best of our knowledge, none of the existing works 
has taken a quantity constraint on commodities into account. 
Note that due to the consideration of marketing strategies or practical conditions, quantities of commodities are often constrained in the real world. 
For example, the scarcity fallacy usually makes commodities more desirable \cite{Worchel1975}, as a company may want to release limited edition commodities to boost customer desire.
Another example is the case of promoting a concert, where the quantity of the concert tickets is constrained by the number of available seats. In this case, due to the schedules of the musicians or singers, it is hard to add more shows and the total number of concert tickets. 
Hence, selecting seeds to receive free commodities will reduce the quantities of commodities that can be sold.
This makes the prior works unapplicable to determine the proper pricing and obtain maximum revenue. 
For example, suppose that the network shown in Figure~\ref{Fig:ex} is considered, where the diffusion model with the monetary concept in the work \cite{Mirrokni2012} is adopted and the function $F(x)=x$ is used to transform the effects of the social influence into the valuation concept.
Without consideration of the commodity quantity, the maximum revenue is $\$28$ obtained at the pricing of $\$7$ with the seed group $\{d,f\}$. 
With the influence from $\{d,f\}$, $a$ and $e$ are the first to purchase the commodity, since both their valuations $\$2+F(5)=\$7$ are equal to the pricing of $\$7$. Then, $b$ and $c$ will subsequently purchase the commodity due to their valuations $\$0+F(2+4+4)=\$10$ and $\$3+F(3+2+1)=\$9$, respectively.
However, if the quantity of commodities is limited to 4, 
the revenue will decrease to $\$14$ since two free commodities given to the seeds $d$ and $f$ should be taken off first. 
In this case, letting the commodity be priced at $\$6$ and choosing only $d$ as the seed will be the best. 
Under the influence of $d$, $a$ first purchases the commodity since its valuation $\$2+F(5)=\$7$ is larger than the pricing of $\$6$. Then, both $b$ and $c$ purchase the commodities due to their valuations $\$0+F(2+4)=\$6$ and $\$3+F(3)=\$6$, respectively. Finally, since $a$, $b$, and $c$ all pay for the commodity sold at the price of $\$6$, the maximum revenue is $\$18$.
Therefore, when the quantity of commodities is limited, 
we argue the need of new approaches for the revenue maximization problem. 

\begin{figure}[t]
\centering
\begin{tikzpicture}
    \node[main node] (a) {$a(\$2)$};
    \node[main node] (c) [right=2cm of a] {$c(\$3)$};
    \node[main node] (b) [above of=c] {$b(\$0)$};
    \node[main node] (d) [below of=c] {$d(\$1)$};
    \node[main node] (e) [right=2cm of b] {$e(\$2)$};
    \node[main node] (f) [below of=e] {$f(\$0)$};

    \path
        (a) edge [->, bend left=20] node {2} (b)
            edge [->, bend right=20] node [swap] {3} (c)
        (b) edge [->] node {1} (a)
            edge [->, bend left=20] node [swap] {2} (f)
        (c) edge [->] node {3} (d)
        (d) edge [->, bend left=40] node {4} (b)
            edge [->, bend right=20] node [swap] {2} (f)
            edge [->, bend left=20] node {5} (a)
        (e) edge [->] node [swap] {4} (b)
            edge [->, bend left=20] node {2} (c)
        (f) edge [->, bend left=20] node {1} (c)
            edge [->] node [swap] {5} (e);
\end{tikzpicture}
\vspace{-3mm}
\caption{A network with the monetary concept.}
\label{Fig:ex}
\end{figure}
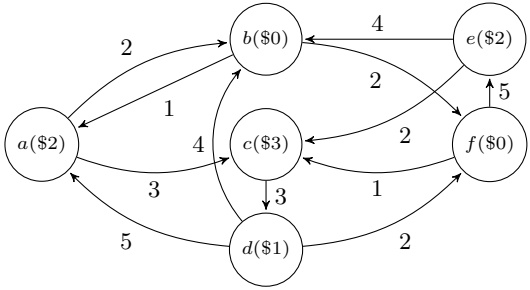

Specifically, we address the revenue maximization problem with a quantity constraint on commodities. 
Given a limited quantity of commodities and a social network with monetary concept regarding to the commodity, 
the problem is to determine the pricing of the commodity and 
identify a small group of people, 
i.e., a seed group, to be the initial customers receiving freebies to help promote the commodity at the beginning, 
so that the total revenue is maximized. 
In this paper, we investigate this problem on the monetary-concept incorporated LT-model \cite{Mirrokni2012}, and propose 
Algorithm PRUB (Pricing searching strategy using Revenue Upper Bound) to derive the optimal solutions. 
Then, we further revise PRUB to propose a heuristic Algorithm PRUB+IF (Pricing searching strategy using Revenue Upper Bound with Importance Feedback) for better efficiency.    
Experiments on real social networks show the good effectiveness of PRUB and PRUB+IF, demonstrating 
the revenue maximization with and without a quantity constraint differs from each other.

In summary, the contributions of this paper are:
\begin{enumerate}
\vspace{-1mm}
\item We are the first to address the general case of the revenue maximization problem. Note that the previous work of the revenue maximization without a quantity constraint is a special case of our problem with the quantity set as the total number of individuals on the social network.
\vspace{-1mm}
\item For the addressed problem, we proposed the optimal PRUB as well as a heuristic PRUB+IF.  
\vspace{-1mm}
\item Experiments on real social networks demonstrate the effectiveness of the proposed approaches for both general and special cases of the revenue maximization problem.
Note even in the case without a commodity constraint, the proposed approach outperforms the state of the art approach that incorporates the monetary concept \cite{Mirrokni2012}.
\end{enumerate}

\section{Problem formulation}
\label{Sec:PF}
To formulate the revenue maximization problem, in this section, we
first describe the social network with the monetary concept, then
explain how the social influences propagate over the network using the
monetary-concept incorporated LT-model \cite{Mirrokni2012}, and finally define the revenue and the revenue
maximization problem formally.

Given a commodity, a social network with the monetary concept,
referred to as a \textbf{monetizing social network}, is a weighted digraph
$G=(V,X,E,W,F)$, where $V$ is the set of
individuals and for each individual $v \in V$, his/her inherent valuation $\chi_v$ is carried in $X$. Usually, the valuation information $X$ can be estimated and learned from questionnaires or
historical sales data, with regard to the commodity \cite{Jung2005}. For any individuals $u,v\in V$, if 
$u$'s purchase will directly encourage $v$'s desire for the commodity, 
an edge from $u$ to $v$, denoted as
$e_{uv}\in E$ ($E$ standing for the edge set), represents the existance of the influence,   
and $w_{uv} \in W$ represents the influence strength as the weight on $e_{uv}$.
Adopting the Concave Graph Model in the work \cite{Mirrokni2012}, we consider a non-negative, non-decreasing, and concave function $F: \mathbb{R}^+ \rightarrow \mathbb{R}^+$ with $F(0)=0$ to transform the weights into the valuation concept. That is, given a set of individuals $S$ directly exerting influences on an individual $v$, $v$'s valuation towards a commodity will become $\chi_v + F(\sum_{i \in S} w_{iv})$.

Based on the monetary-concept incorporated LT-model \cite{Mirrokni2012}, the propagation of social influences over the
monetizing social network happens whenever individuals \textit{adopt} the
commodity. An individual \textit{adopts} the commodity if and only if
1) the pricing of the commodity is less than or equal to his/her
valuation, or 2) this individual is an initial customer receiving a
freebie. After an individual adopts the commodity, he/she will then
exert influences on his/her out-neighbors to encourage them and raise their valuations for adoption. 
The following shows an example of
social influences propagating over the monetizing social network.

\begin{example}
\label{Ex:ex1}
Given the monetizing social network in
Figure~\ref{Fig:ex} and the concave influence function $F(x)=x$, consider the promotion of a concert. Suppose that the pricing of the concert tickets is
$\$7$ and the seed group is $A = \{d\}$. Influenced by $d$, the valuations
of $a$, $b$, and $f$ are increased by $F(5)=\$5$, $F(4)=\$4$, and $F(2)=\$2$, respectively.
The valuations of $a$, $b$ and $f$ thus become $\$2+\$5=\$7$, $\$0+\$4=\$4$,
and $\$0+\$2=\$2$, respectively. Then, note that as the valuation of $a$
is equal to the pricing of the concert tickets $\$7$, $a$ also adopts the
concert ticket and exerts influences on its out-neighbors $b$ and $c$.
The valuations of $b$ and $c$ thus increase to $\$0+F(4+2)=\$6$ and
$\$3+F(3)=\$6$, respectively. After that, as no other individuals adopt
the concert ticket, the influence propagation stops. Consequently, under
$d$'s influence, $a$ will also adopt the concert ticket, and the
valuations of $b$, $c$ and $f$ become $\$6$, $\$6$, and $\$2$, respectively.

Consider another example of the seed group $A = \{d, f\}$. Under the
influence of $d$ and $f$, the valuations of $a$, $b$, $c$ and $e$ are
raised to $\$7$, $\$4$, $\$4$, and $\$7$, respectively. (Here $d$'s influence
on $f$ is neglected since $f$ as an initial customer has already adopted the concert ticket.)
Hence, $a$ and $e$ adopt the concert tickets and exert influences.
Consequently, the valuations of $b$ and $c$ become:
\vspace{-1mm}
\begin{displaymath}
\begin{array}{l}
b: \$0+F(4 \text{(from }d \text{)}+2 \text{(from }a \text{)}+4 \text{(from }e \text{)})=\$10 \\
c: \$3+F(1 \text{(from }f \text{)}+3 \text{(from }a \text{)}+2 \text{(from }e \text{)})=\$9
\end{array}
\vspace{-1mm}
\end{displaymath}
In the end, $a$, $b$, $c$, and $e$ all adopt the concert tickets since their
final valuations are larger than or equal to the pricing of the concert tickets
$\$7$.
\exend
\end{example}

Here we define that the revenue comes from the number of people
paying for the commodity and the pricing of the commodity. Recall Example~\ref{Ex:ex1} where the pricing is $\$7$ and the seed group is $\{d,f\}$, and
suppose that the quantity of the concert tickets is 4. The revenue is
$\$7\times 2=\$14$, since two concert tickets are used as free tickets given to
$d$ and $f$, and only two concert tickets are left for sale (even though
$a$, $b$, $c$, and $e$ all want to purchase these concert tickets).

Formally, given the quantity of commodities $n$, the pricing $p$ of the commodity, and the seed group $A$, the revenue is defined as follows.

\begin{definition}
\textbf{Revenue function.} Given the quantity $n$, the revenue at the pricing of $p$ with the seed group $A$ is
\begin{displaymath}
R(n,p,A) = p \times \min\{|\sigma(A) \setminus A|, n-|A|\},
\end{displaymath}
where $\sigma(A) \supseteq A$ is the set of individuals adopting the commodity under the influences of $A$.
\end{definition}

To increase the revenue, finding the proper pricing $p$ and identifying a proper seed group $A$ is important. Consider
the same setting mentioned above, but with an empty seed group and
the pricing set as $\$1$. The revenue is $R(4,\$1,\emptyset)=\$1\times 4=\$4$, which is
less than $R(4,\$7,\{d,f\})=\$14$. Since any company will expect to earn as higher revenue as
possible given a fixed amount of commodities, in this paper, we are
interested in looking for such pricing and a seed group as initial
customers that can bring maximum revenue. The proposed problem is
formally defined as follows.

\begin{definition}
\textbf{The RM$_\text{w/QC}$ Problem: Revenue Maximization with a Quantity Constraint. }Given a monetizing social network $G=(V,X,E,W,F)$, a set of input prices $P \subseteq \mathbb{R}^+$, and a quantity of commodities $n$, the problem is to determine the pricing
$p_{\text{max}} \in P$ of the commodity and find a seed group $A_{\text{max}} \subseteq V$ as initial customers, where $|A_\text{max}|\leq n$, such that the revenue 
$R(n,p,A)$ is maximized, i.e., $(p_{\text{max}},A_{\text{max}}) = \argmax\limits_{p,A}R(n,p,A)$.
\end{definition}

\vspace{-1mm}
Note that the revenue function $R(n,p,A)$ is not single-peaked with respect to the pricing, which means the approaches such as the binary search and the gradient decent search are not applicable to finding out the optimal pricing. 
For example, following the setting of Example~\ref{Ex:ex1} and letting the quantity of the concert tickets be 4, we obtain the highest revenue at pricing of $\$6$, $\$7$, and $\$8$ as $\$18$, $\$14$, and $\$16$, respectively (with the seed groups $\{d\}$, $\{d,f\}$, and $\{d,e\}$, accordingly).
Obviously, with respect to the pricing, $R(n,p,A)$ is not single-peaked.

Furthermore, the RM$_\text{w/QC}$ problem can be proved to be NP-hard. To show this, we first introduce a special case of RM$_\text{w/QC}$, and then prove that this special case is NP-hard. As the special case is NP-hard, RM$_\text{w/QC}$ is thus NP-hard.

\vspace{-1mm}
\begin{definition}
\textbf{SpecialRM. }This problem is a special case of
RM$_{\text{w/QC}}$, and asks for only one input price and a sufficient
quantity of commodities for the population. That is, given $G$,
$P=\{p\}$, and $n=|V|$, SpecialRM is to determine a seed group $A$
such that $R(|V|,p,A)$ is maximized.
\end{definition}

\vspace{-1mm}
\begin{theorem}
\label{Theo:np}
The problem of SpecialRM is NP-hard.
\end{theorem}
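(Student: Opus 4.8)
The plan is to first strip the objective down to a pure influence-counting problem and then embed Set Cover into it. Because SpecialRM fixes the quantity at $n=|V|$ and uses a single price $p$, observe that $\sigma(A)\setminus A\subseteq V\setminus A$ forces $|\sigma(A)\setminus A|\le |V|-|A| = n-|A|$, so the first argument always attains the minimum in the revenue function. Hence $R(|V|,p,A)=p\,|\sigma(A)\setminus A| = p\,(|\sigma(A)|-|A|)$, and since $p$ is a fixed constant, maximizing the revenue is \emph{exactly} the problem of choosing a seed set $A$ that maximizes the number of \emph{non-seed} adopters $|\sigma(A)|-|A|$. I would prove this problem NP-hard by a polynomial reduction from Set Cover.

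Given a Set Cover instance with universe $U=\{u_1,\dots,u_m\}$, sets $S_1,\dots,S_k$, and bound $\ell$, I would build a monetizing social network with one \emph{set node} $s_i$ per set and one \emph{element node} $e_j$ per universe element, taking $F(x)=x$, price $p=m$, and every inherent valuation $\chi=0$. For each $u_j\in S_i$ I add an edge $s_i\to e_j$ of weight $m$, so that an element adopts as soon as a single incoming set node is active; and from each element I add an edge $e_j\to s_i$ of weight $1$ to \emph{every} set node, so that a set node adopts only once \emph{all} $m$ element nodes are active. The construction is plainly polynomial.

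The heart of the argument is a tight analysis of the deterministic cascade $\sigma$. First I would establish the two local rules implied by $p=m$: a non-seed element node adopts iff at least one of its in-neighbor set nodes is active, and a non-seed set node adopts iff all $m$ element nodes are active. The second rule is the crucial device: no set node can activate before the element layer is fully activated, so activation cannot be bootstrapped through the set layer, and the set of active elements at the fixed point equals the seeded elements together with the elements covered by the seeded sets. From this I would show the optimum always fully activates the network: once all elements are active, all $k$ set nodes activate, giving $|\sigma(A)|-|A| = (k+m)-|A|$, whereas any strategy that leaves an element inactive activates no set node and nets at most $m$. Minimizing $|A|$ subject to full activation is exactly covering $U$ by seeded sets and/or seeded singleton elements; since every element lies in some $S_i$, singletons are dominated by sets, so the minimum number of seeds equals the minimum cover size $c^{*}$. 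As $c^{*}\le k$ gives $m\le k+m-c^{*}$, both partial and non-minimal full strategies are bounded by $p\,(k+m-c^{*})$, which a minimum cover attains; hence the maximum revenue is exactly $p\,(k+m-c^{*})$, and a cover of size $\le\ell$ exists iff the maximum revenue is at least $p\,(k+m-\ell)$.

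I expect the \textbf{main obstacle} to be this cascade analysis, specifically ruling out every \emph{shortcut} seeding strategy: one must verify carefully that seeding element nodes or mixing sets cannot activate the set layer more cheaply than a genuine cover. This is precisely why the $e_j\to s_i$ edges are wired so that a set node requires \emph{all} elements, which eliminates any single trigger node whose seeding would cascade through the whole graph. Making the ``active elements $=$ seeded elements $\cup$ covered-by-seeds'' invariant airtight, and thereby pinning the minimum number of seeds for full activation to $c^{*}$, is the delicate part; the two local adoption rules and the polynomial size of the construction are routine.
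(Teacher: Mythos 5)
Your proposal is correct, but it takes a genuinely different route from the paper. The paper reduces from \emph{Minimum Vertex Cover}: it keeps the same vertex set, bidirects every edge, sets all inherent valuations to $0$, takes $F(x)=x$, and chooses $w_{uv}=p/d_{\text{in}}(v)$, so that every node adopts if and only if \emph{all} of its in-neighbors have adopted; it then argues that augmenting a (minimum-cardinality) revenue-maximizing seed set $A$ to $A'=A\cup(V\setminus\sigma(A))$ yields a set with $\sigma(A')=V$, which it identifies with a minimum vertex cover. You instead reduce from \emph{Set Cover} via a two-layer bipartite gadget: element nodes behave as OR-gates (edge weight $m\ge p$ from each containing set node) and set nodes behave as AND-gates (unit weights from all $m$ elements, threshold exactly $p=m$), with the feedback edges wired to \emph{all} set nodes precisely so that no set node can fire before the element layer is complete. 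Both constructions exploit the same calibration trick --- choosing weights so a node's threshold equals an exact sum of incoming weights --- but your argument is more fully worked out: the no-bootstrapping invariant gives a closed form $p(k+m-c^{*})$ for the optimum, which yields a clean decision-level equivalence (cover of size $\le\ell$ iff revenue $\ge p(k+m-\ell)$), whereas the paper's proof is more compact (no new nodes, direct correspondence between full activation and covers) but leaves the minimality direction of that correspondence largely implicit. Two minor points you should make explicit when writing this up: (i) the standard assumption that every element of $U$ belongs to at least one $S_i$ (otherwise seeded singleton elements are not dominated by sets and the identification of the minimum seed count with $c^{*}$ needs an extra clause), and (ii) the inequality $c^{*}\le k$ that you invoke to show full activation dominates every partial strategy, which holds exactly because a cover exists under assumption (i).
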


\begin{proof}
We reduce the Minimum Vertex Cover (MVC) problem to the SpecialRM
problem. Given every instance of MVC involving an undirected graph
$G' = (V',E')$, we can construct a corresponding instance
$G=(V,X,E,W,F)$ of SpecialRM as follows. (1) Set $V=V'$; (2) set $X
= \{\chi_v|\forall v\in V, \chi_v=0\}$; (3) direct all edges
in both directions, i.e., for each $e_{uv} \in E'$, $e_{uv} \in E$
and $e_{vu} \in E$; (4) for each $e_{uv}\in E$, define
$w_{uv}=\frac{p}{d_{\text{in}}(v)}$, where $d_{\text{in}}(v)$ is the
in-degree of $v$ in $G$; and (5) set $F(x)=x$. For the instance of
SpecialRM, if there exists a minimum set $A$ that maximizes the
revenue, we can obtain another set $A' = A \cup
(V\setminus\sigma(A))$. Then, it can be shown that $A'$ is the
minimum vertex cover of $G'$ since
\vspace{-1mm}
\begin{displaymath}
\sigma(A') \supseteq \sigma(A) \cup \sigma(V\setminus\sigma(A)) \supseteq \sigma(A) \cup (V\setminus\sigma(A)) = V.
\vspace{-1mm}
\end{displaymath}
Theorem~\ref{Theo:np} is thus proved.
\end{proof}

\vspace{-1mm}
\section{Algorithm}

\begin{table}[t]
\caption{\subref{T:maximalBudgets} The maximum valuations; \subref{T:maximalRevenues} the upper bounds of maximum revenue.}
\vspace{-1mm}
\centering
\subfigure[][]{
\small \centering 
\begin{tabular}{c|c}
\text{$v$} & \text{$X_\text{max}(v)$} \\ \hline 
$a$ & $\$8$\\ \hline 
$b$ & $\$10$\\ \hline 
$c$ & $\$9$\\ \hline 
$d$ & $\$4$\\ \hline 
$e$ & $\$7$\\ \hline
$f$ & $\$4$\\
\end{tabular}
\label{T:maximalBudgets}
}%
\subfigure[][]{
\small \centering
\begin{tabular}{c|c}
\text{$p$} & \text{$R_\text{bound}(4,p)$} \\ \hline 
$\$1$ & $\$4$\\ \hline
$\$2$ & $\$8$\\ \hline
$\$3$ & $\$12$\\ \hline
$\$4$ & $\$16$\\ \hline
$\$5$ & $\$20$\\ \hline
$\$6$ & $\$24$\\ \hline
$\$7$ & $\$28$\\ \hline
$\$8$ & $\$24$\\ \hline
$\$9$ & $\$18$\\ \hline
$\$10$ & $\$10$\\
\end{tabular}
\label{T:maximalRevenues}
}
\vspace{-2mm}
\end{table}

In this section, for the revenue maximization problem, we first
propose Algorithm PRUB (Pricing searching strategy using Revenue Upper
Bound) that is able to derive the optimal solutions. Note that
finding the optimal pair of $p_{\text{max}}$ and $A_{\text{max}}$ in
the revenue maximization problem is exhausted. Then, for better
efficiency, a heuristic Algorithm PRUB+IF (PRUB with Importance
Feedback) is introduced following the framework of PRUB.

\subsection{Algorithm PRUB}
\label{Sec:framework}

To tackle the RM$_\text{w/QC}$ problem, Algorithm PRUB is
designed to obtain the optimal solutions. As mentioned previously,
finding a proper pair of pricing and seed group is critical in this
problem, since the revenue comes from the pricing of the commodity and
the number of people paying for the commodity. A na\"{i}ve approach for
finding the optimal pair of pricing $p_{\text{max}}$ and seed group $A_{\text{max}}$ is to
search at all $p \in P$, enumerate all seed groups $A$, and calculate the
corresponding revenue $R(n,p,A)$. The rationale of Algorithm PRUB is based on
this na\"{i}ve approach and prunes the search space by 1) progressively
filtering out non-candidate pricing and 2) deriving an upper bound
of the size of seed groups for each price. We illustrate
the two ideas of pruning in the following.



\textbf{Non-candidate Pricing Filtering. }The first idea is to prune
the searching on non-candidate pricing. Motivated by the concern of only maximum revenue, this pruning derives an upper bound of maximum revenue at each price, and utilizes an achievable global revenue
$r_{\text{global}}$ that records the maximum revenue discovered so
far to progressively filter out the pricing that will not result in higher revenue. That is, if an upper
bound of maximum revenue at each price can be derived, PRUB then
does not need to seek the seed group for maximum revenue at a specific price when the upper bound at this price is less than or equal to $r_{\text{global}}$. When
$r_{\text{global}}$ is updated and increased progressively after the
successful searching of the seed group at each price, the
non-candidate pricing is also progressively detected and filtered
out. 

In order to infer the upper bound of maximum revenue at a specific price, the maximum number of individuals who have potential for adopting the commodity at this price is needed. To find out whether or not an individual has this potential, it needs to estimate the valuation of the individual.
Therefore, in the following, we first introduce the definitions of \textit{maximum valuations} and \textit{potential buyers} for defining the upper bound of maximum revenue.

\begin{definition}
\textbf{Maximum valuation. }The maximum valuation of an individual $v$ is the valuation under the influences from all $v$'s in-neighbors, which is denoted as
\begin{displaymath}
X_\text{max}(v) = \chi_v+F(\sum\limits_{u\in v\text{'s in-neighbors}}w_{uv}).
\end{displaymath}
\end{definition}

\begin{definition}
\textbf{Potential buyer. }An individual $v$ is regarded as a potential buyer at a specific price $p$ if $v$ has potential for adopting the commodity at $p$, i.e., $X_\text{max}(v) \geq p$.
\end{definition}

\begin{definition}
\textbf{Upper bound of maximum revenue. }Given a quantity constraint $n$, the upper bound of maximum revenue at a price $p$ is 
\begin{equation}
\label{E:upperRevenue}
R_{\text{bound}}(n,p) = p \times \min\{n, m_p\},
\end{equation}
where $m_p$ is the number of potential buyers at $p$, i.e., $m_p = |\{v
| v \in V, X_\text{max}(v) \geq p\}|$.

\end{definition}


\begin{example}
\label{Ex:framework}
Given the monetizing social network in Figure
\ref{Fig:ex}, the concave influence function $F(x)=x$, and a set of input prices $P = \{p|p \in \mathbb{I}^+, 1 \leq p \leq 10\}$, consider the promotion of a concert. Table~\ref{T:maximalBudgets} shows the maximum valuation
of each individual. Suppose the quantity of the concert tickets is set as 4. The
corresponding upper bounds of maximum revenue for all prices are listed in Table~\ref{T:maximalRevenues}.
\exend
\end{example}

In addition, to have better effectiveness of pruning by utilizing
the upper bound of maximum revenue, PRUB searches the prices in a descending
order of the upper bounds. Then, once PRUB discovers a price, at which the upper bound is less than or equal to the achievable global
revenue $r_{\text{global}}$, the searching for the following prices (including the current price) can be ignored.
Therefore, in the above example, PRUB will start the examination of
the prices from $p=\$7$, $p=\$6$, $p=\$8$, $\cdots$, and so on.


\textbf{Bound of Seed Group's Size. }The second idea for pruning
search space is to avoid enumerating useless seed groups under a
specific price. The inspiration is, under the consideration of a
specific price $p$, it involves more than $\frac{r_{\text{global}}}{p}$
quantities of commodities to be sold for finding the revenue
higher than the achievable global revenue $r_{\text{global}}$ obtained so far.
When considering the price $p$, Algorithm PRUB then bounds the size
of the seed groups as
\begin{equation}
\label{E:ABound}
\left|A\right|<n-\frac{r_{\text{global}}}{p},
\end{equation}
since the quantity of commodities is limited. Follow Example~\ref{Ex:framework}, and suppose the achievable global revenue $r_{\text{global}}=\$10$. In order to find the
revenue higher than $\$10$, when searching at the price $p=\$7$, PRUB
expects more than $\frac{10}{7}$ concert tickets left for sale and enumerates only the seed groups of sizes less than or equal to
2 (since the quantity of the concert tickets is set as 4). The searching on the seed groups of sizes 3 and 4 is thus pruned.

Note that incorporating the two pruning methods is satisfactory to the accuracy. Therefore, Algorithm PRUB is still able to derive the optimal solutions, even though the search space is reduced. The details of PRUB are presented in Algorithm~\ref{Algo:framework}.
In the following, we show an example to illustrate how PRUB figures out the optimal pair of $p_{\text{max}}$ and $A_{\text{max}}$.

\setlength{\algomargin}{2.5ex}
\begin{algorithm}[t]
\label{Algo:framework}
\caption{PRUB}
\scalefont{0.9}
\DontPrintSemicolon
\SetNlSty{texttt}{}{}
\SetAlgoNlRelativeSize{-3}
\SetNlSkip{0.25em}
\SetVlineSkip{1pt}
\KwIn{A monetizing social network $G=(V, X, E, W, F)$; a set of input prices $P$; a quantity of commodities $n$.}
\KwOut{The pricing $p_{\text{max}}$; the seed group $A_{\text{max}}$.}
$p_{\text{max}}\gets 0$, $A_{\text{max}}\gets\emptyset$, $r_{\text{global}} \gets 0$\;
Derive $R_{\text{bound}}(n,p)$ for all $p\in P$\;
Sort all $p \in P$ descendingly by $R_{\text{bound}}(n,p)$\;
\label{code:fram:search1}
\For{$p \in P$}{
	\If{$p$ is non-candidate pricing}{
		\KwRet{$p_{\text{max}}$, $A_{\text{max}}$}
	}
	Enumerate all the seed groups whose size is bounded by Equation~(\ref{E:ABound}) (including the size 0)\;
	Compute $R(n,p,A)$ for those enumerated seed groups\;
	\If{any $R(n,p,A)>r_{\text{global}}$}{
		\label{code:fram:update1}
			$p_{\text{max}}=p$, $A_{\text{max}}=A$, $r_{\text{global}}=R(n,p,A)$\;
		\label{code:fram:update2}
	}
}
\KwRet{$p_{\text{max}}$, $A_{\text{max}}$}
\end{algorithm}

\begin{example}
\label{Ex:exPRUB}
Follow Example~\ref{Ex:framework}, and initialize $p_{\text{max}}$, $A_{\text{max}}$, and $r_{\text{global}}$ as $\$0$, an
empty set, and $\$0$, respectively. According to the upper bounds of
maximum revenue in Table~\ref{T:maximalRevenues}, PRUB will consider the prices in the
following order: $p=\$7$, $p=\$6$, $p=\$8$, $\cdots$, and so on.

Beginning from the price $\$7$, PRUB checks whether or not
$R_\text{bound}(4,\$7) > r_{\text{global}}$. Since $\$28>\$0$, PRUB
looks for the maximum revenue at the price $\$7$ by enumerating all
the seed groups whose size is bounded by Equation~(\ref{E:ABound}) (including
the size 0). First, for the empty seed group, no individual adopts the
concert ticket at $\$7$. So PRUB goes to the size 1. Then, PRUB lists all the
seed groups of size 1 since $ 1 < 4 - \frac{0}{7} = 4$. Among all
the seed groups of size 1, PRUB finds the highest revenue $R(4,\$7,\{d\})=\$7$ (referred to Example~\ref{Ex:ex1}). As $\$7$ is higher than
$r_{\text{global}} = \$0$, PRUB updates $p_{\text{max}} = \$7$, $A_{\text{max}} = \{d\}$, and $r_{\text{global}} = \$7$. Before
seeking the seed groups of size 2 for maximum revenue, PRUB ensures
$ 2 < 4 - \frac{7}{7} = 3$. After all seed groups of size 2 are
enumerated, the highest revenue $R(4,\$7,\{d,f\})=\$14$ is found (referred to Example~\ref{Ex:ex1}). Because $\$14 >
r_{\text{global}} = \$7$, PRUB updates $p_{\text{max}} = \$7$, $A_{\text{max}} = \{d, f\}$, and $r_{\text{global}} = \$14$. Later, note
that $3 > 4 - \frac{14}{7} = 2$. The searching for the maximum revenue
at the price $\$7$ stops. The next price considered is $\$6$. A similar
process is performed until PRUB finds such a price $p$ that
$r_{\text{global}} \geq R_\text{bound}(4,p)$. In the end, the maximum
revenue is $R(4,\$6,\{d\})=\$18$, where the pair of $p_\text{max}=\$6$ and $A_\text{max}=\{d\}$ is the optimal solution.
\exend
\end{example}

\begin{theorem}
\label{Theo:complexity}
The time complexity of PRUB is $O(2^{|V|}|V|^2|P|)$.
\end{theorem}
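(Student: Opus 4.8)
The plan is to bound the running time by the dominant triply-nested structure of the algorithm: the outer loop over the $|P|$ prices, the enumeration of seed groups for each fixed price, and the cost of evaluating the revenue $R(n,p,A)$ for each enumerated seed group. I would first argue that the preprocessing in lines~1--3 is negligible, then isolate the main loop, and finally pin down the per-evaluation cost, which is where the $|V|^2$ factor originates.

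First I would dispatch the preprocessing. Computing every maximum valuation $X_\text{max}(v)$ amounts to summing the incoming weights at each vertex and applying $F$ once, which costs $O(|E|)=O(|V|^2)$ in total over all vertices. Deriving $R_{\text{bound}}(n,p)$ for all $p\in P$ then requires, for each price, counting the potential buyers whose maximum valuation is at least $p$; this is $O(|V|\,|P|)$, and sorting the prices descendingly by $R_{\text{bound}}$ adds $O(|P|\log|P|)$. All of these terms are polynomial and are dominated by the main loop, so they do not affect the stated bound.

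Next I would count the work inside the main loop. The outer \textbf{for} loop runs over at most $|P|$ prices. For a fixed price, the algorithm enumerates all seed groups whose size obeys the bound of Equation~(\ref{E:ABound}); in the worst case (for instance when $r_{\text{global}}$ is still small and $n=|V|$) this bound prunes essentially nothing, so up to $2^{|V|}$ subsets of $V$ are examined. Listing each subset costs $O(|V|)$, which is dominated by the subsequent revenue evaluation. Multiplying the three factors gives $|P|\cdot 2^{|V|}\cdot(\text{cost of one revenue evaluation})$, so it remains only to show that a single evaluation of $R(n,p,A)$ runs in $O(|V|^2)$ time.

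The crux --- and the step I expect to require the most care --- is bounding the cost of computing $R(n,p,A)$, which hinges on simulating the monetary LT-model propagation to obtain $\sigma(A)$. I would argue that the diffusion runs in $O(|V|^2)$ time via an amortized, edge-based accounting: maintain for each vertex the running sum of weights contributed by its already-adopting in-neighbors, and process newly adopting vertices through a queue; when a vertex adopts, update the running sums of its out-neighbors in time proportional to its out-degree and enqueue any out-neighbor whose valuation newly reaches $p$. Since each vertex adopts at most once and triggers updates only along its outgoing edges, the total update work across the whole cascade is $O(|E|)=O(|V|^2)$, after which $|\sigma(A)\setminus A|$ and hence $R(n,p,A)$ follow in $O(|V|)$. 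The subtlety to handle carefully here is ruling out the na\"{i}ve $O(|V|\cdot|E|)$ bound that would arise from rescanning all vertices each round; the queue-plus-running-sum bookkeeping is what keeps the simulation at $O(|V|^2)$. Substituting this per-evaluation cost into the product yields the overall bound $O\!\left(2^{|V|}|V|^2|P|\right)$, completing the argument.
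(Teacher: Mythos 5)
Your proposal is correct and follows the same top-level decomposition as the paper --- (number of prices) $\times$ (number of enumerated seed groups) $\times$ (cost of one cascade simulation) --- but the two arguments place their care in opposite places. The paper keeps the seed-group count refined, charging a group of size $i$ a cascade cost of $O((|V|-i)^2)$ and then collapsing $\sum_{i=0}^{n}\binom{|V|}{i}(|V|-i)^2$ to $O(2^{|V|}|V|^2)$ through a chain of binomial identities, while simply asserting the per-cascade cost without justification. You do the reverse: you bound the number of enumerated groups crudely by $2^{|V|}$ and the cost of every cascade uniformly by $O(|V|^2)$, so no identity manipulation is needed at all, and you instead spend your effort exactly where the paper is silent --- justifying the $O(|V|^2)$ simulation cost via the queue-plus-running-sums amortization in which each vertex adopts at most once and each edge is relaxed at most once, giving $O(|E|)=O(|V|^2)$ total. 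Since only an upper bound is claimed, your cruder counting loses nothing (the paper's refined sum evaluates to the same $O(2^{|V|}|V|^2)$), and your explicit treatment of the propagation step is arguably a more complete account of the one claim in the paper's proof that is merely stated. Both routes are valid; yours is the more elementary on the counting side and the more detailed on the simulation side.
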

\begin{proof}
We first show that PRUB costs $O(2^{|V|}|V|^2)$ time at a specific price for
searching the maximum revenue. Given the quantity of commodities
$n$, the largest size of the seed group is $n$. In total, there are
thus $\sum\limits_{i=0}^{n}{{|V|}\choose i}$ seed groups. For a seed group
of size $i$, there are $i$ seeds exerting influences on the other O($|V|
- i$) individuals that are not seeds. Due to the influence cascade, it
costs O($(|V| - i)^2$) time to figure out the influence spread of
a seed group of size $i$. Therefore, the total time complexity for
searching the maximum revenue by enumerating seed groups is
\begin{displaymath}
\begin{array}{rl}
&O(\sum\limits_{i=0}^{n}{{{|V|}\choose{i}}(|V|-i)^2}) \\
=&O(\sum\limits_{i=0}^{n}{\frac{|V|!}{(|V|-i)!i!} (|V|-i)^2}) \\
=&O(\sum\limits_{i=0}^{n}{\frac{|V|^2(|V|-1)!}{(|V|-i-1)!i!}-\frac{|V|(|V|-1)(|V|-2)!}{(|V|-i-1)!(i-1)!}}) \\
=&O(|V|^2\sum\limits_{i=0}^{n}{{|V|-1}\choose{i}}-|V|^2\sum\limits_{i=0}^{n}{{|V|-2}\choose{i-1}}+|V|\sum\limits_{i=0}^{n}{{|V|-2}\choose{i-1}}) \\
=&O(2^{|V|-1}|V|^2-2^{|V|-2}|V|^2+2^{|V|-2}|V|) \\
=&O(2^{|V|}|V|^2).\\
\end{array}
\end{displaymath}
As searching at a price costs $O(2^{|V|}|V|^2)$, the total time complexity of
PRUB is thus $O(2^{|V|}|V|^2|P|)$. Theorem~\ref{Theo:complexity} is proved.
\end{proof}

Note that the high complexity of Algorithm PRUB indicates the
poor scalability of the optimal algorithm for real social networks
which are usually large. Therefore, in the following subsection, we
propose a heuristic based on the framework of PRUB to obtain the
feasible solutions on larger instances.

\subsection{Algorithm PRUB+IF}
\label{Sec:contribution}

In this section, the heuristic algorithm PRUB+IF (PRUB with Importance
Feedback) is proposed as a feasible solution. The
heuristic PRUB+IF differs from PRUB at the way of finding the most
proper seed group at each price. Following the framework of PRUB,
PRUB+IF introduces the concept of \textit{pricing-sensitive importance},
accompanying the contribution feedback from the out-neighbors, for
selecting seeds in a heuristic manner, instead of listing all seed
groups.

The main idea of PRUB+IF is that, an individual with greater
potential for making others adopt the commodity should be regarded as
more important. Selecting the most important individuals as seeds
greedily is able to lead to a feasible solution. Then, the question
is how to evaluate one's potential for making others adopt the commodity. The
intuition is about how many others will be encouraged and how much
their valuations can be increased for approaching the pricing, under
one's influence. We then accumulate these effects to calculate one's
importance. Here only the effects on potential buyers should be
included in the accumulation since only the potential buyers have possibilities of paying for the commodity. In addition, note that the
individuals who newly adopt the commodity under one's (direct and
indirect) influence will further spread their influences to encourage
more others in adoption. In order to estimate one's importance
more carefully, the effects through influence cascades should also be
included in the accumulation. Therefore, PRUB+IF introduces the
pricing-sensitive importance that sums up the consideration of the
normalized weights, the influence propagation, and the potential
buyers in the measurement of one's advantage in the commodity
promotion at the given pricing. Then, the strategy is to select seeds in
accordance with the pricing-sensitive importance. 

In the following,
we introduce the three key points, Normalized Weight, Feedback of Influence Propagation, and Potential-Buyer Filtering, in measuring the pricing-sensitive importance. Briefly, for each individual $u$, 
1) the Normalized Weight is used to evaluate $u$'s importance according to $u$'s direct effect on another individual $v$'s adoption; 2) the Feedback of Influence Propagation is incorporated to evaluate $u$'s importance towards $v$ by considering also the indirect effects through influence cascades; 3) the Potential-Buyer Filtering is considered to derive $u$'s pricing-sensitive importance from accumulating all $u$'s direct and indirect effects on all the other potential buyers.


\begin{enumerate}

\vspace{-1mm}
\item \textbf{Normalized Weight. }

By intuition, for commodity promotion, 
an individual $u$'s importance towards another individual $v$ 
can be evaluated from how much $v$'s valuation approaches the pricing due to $u$'s effect. 
Inspired by this intuition, PRUB+IF 
calculates the normalized weight $\hat{w}_{uv}$, indicating $u$'s importance towards $v$'s adoption, as follows. 
\vspace{-2mm}
\begin{flalign*}
\footnotesize
&\hat{w}_{uv} =&
\end{flalign*}
\begin{equation}
\label{E:nW}
\footnotesize
\begin{cases}
0, &\text{if } p \leq X_{A}(v) \\
\min\{1, \frac{F(w_{uv}+\sum\limits_{i\in \sigma(A)}w_{iv})-F(\sum\limits_{i\in \sigma(A)}w_{iv})}{ p-X_{A}(v)}\}, &\text{otherwise}
\end{cases}
,
\end{equation}

where $X_{A}(v) = \chi_v+F(\sum\limits_{i\in \sigma(A)}w_{iv})$ is $v$'s valuation of the commodity under the current seed group $A$'s effects. In the case of $p \leq X_{A}(v)$, it means $v$ has adopted the commodity. So the normalized
weight $\hat{w}_{uv} = 0$, indicating that $u$'s importance towards $v$ is none due to no impact from $u$ on $v$'s adoption of the commodity.
For the other case, the normalized
weight from $u$ to $v$ is bounded by 1, and is the portion of $u$'s impact on $v$'s
adoption of the commodity.

\vspace{-2mm}
\begin{example}
\label{Ex:Wnorm}
Given the monetizing social network in Figure~\ref{Fig:ex} and the concave influence function $F(x)=x$, consider the promotion of a concert. Given
$p= \$7$ and $A=\{\}$, no one has adopted the concert tickets. Under this
condition, $X_{\{\}}(a)=\$2$ is less than the pricing of $\$7$, and  
the normalized weight from $d$ to $a$ is 
$\hat{w}_{da} = \min\{1, \frac{\$5-\$0}{\$7-\$2} \} = 1$.
This means that selecting $d$ into the seed group $\{\}$ can make $a$ adopt the concert ticket immediately. 
Similarly,
\vspace{-1mm}
\begin{displaymath}
\begin{array}{llll}
\hat{w}_{db} = \frac{4}{7}, &
\hat{w}_{dc} = 0, &
\hat{w}_{de} = 0, &
\hat{w}_{df} = \frac{2}{7}. \exend\\
\end{array}
\end{displaymath}
\end{example}

\item \textbf{Feedback of Influence Propagation. }

Note that the effect of influence propagation is important in the commodity promotion. 
To carefully estimate one's advantage in the promotion, PRUB+IF further takes the importance feedback from those individuals $i$ into account on $u$'s importance, if $i$ adopts the commodity due to $u$'s direct or indirect effect and thus devotes to the propagation of $u$'s effect. 
Therefore, $u$'s importance towards $v$ will be derived recursively as follows. 
\begin{flalign*}
\footnotesize
&IF^{(k+1)}(u,v)=&
\end{flalign*}
\vspace{-3mm}
\begin{equation}
\label{E:IF}
\footnotesize
\begin{cases}
\min\{1,  IF^{(k)}(u,v)  +\sum\limits_{i\in V_u^{(k)}}{\hat{w}_{iv}}\}, & \text{if $u\neq v$}\\
0, & \text{otherwise}
\end{cases}
,
\end{equation}
where $IF^{(0)}(u,v)=\hat{w}_{uv}$ and $V_u^{(k)}$ contains the individuals who newly adopt the commodity by $u$'s effects at $k^\text{th}$ propagation, i.e., $V_u^{(k)}=\{i|i\in V, IF^{(k)}(u,i)=1\wedge\forall k'<k, IF^{(k')}(u,i)<1\}$.
As the influence propagates, there are multiple updates for $u$'s importance towards $v$.
When the propagation stops, i.e., $V_u^{(k)}=\emptyset$, the final importance of $u$ towards $v$ is denoted as $IF(u,v)$.
In other words, Equation (\ref{E:IF}) evaluates the advantage of
selecting $u$ into the current seed group from the aspect of $v$'s adoption, by considering the direct effect $IF^{(0)}(u,v)=\hat{w}_{uv}$ derived by Equation (\ref{E:nW}) and the indirect effects $\sum\limits_{k}\sum\limits_{i\in V_u^{(k)}}{\hat{w}_{iv}}$ propagated through individuals $i$. 
In addition, Equation (\ref{E:IF})
ensures that the importance of $u$ towards $v$ is bounded by 1, even though there may be many direct and indirect impacts from $u$ to $v$.

\vspace{-2mm}
\begin{example}
\label{Ex:propagate}
Follow Example~\ref{Ex:Wnorm} and consider the individual $d$. Initially, $V_d^{(0)} = \{a\}$, because $IF^{(0)}(d,a)=\hat{w}_{da}=1$.
$IF^{(1)}(d,\cdot)$ for all individuals (excluding
$d$ itself) are thus calculated as follows.
\vspace{-1mm}
\begin{displaymath}
\begin{array}{l}
IF^{(1)}(d, a) = \min \{1,  1 + \hat{w}_{aa} \} = 1  \\ 
IF^{(1)}(d, b) = \min \{1, \frac{4}{7} + \hat{w}_{ab} \} =  
\frac{6}{7} \\
IF^{(1)}(d, c) = \min \{1, 0 + \hat{w}_{ac}\} =
\frac{3}{4} \\
IF^{(1)}(d, e) = \min \{1, 0 + \hat{w}_{ae}\} = 0\\
IF^{(1)}(d, f) = \min \{1, \frac{2}{7} + \hat{w}_{af}\} = \frac{2}{7} \\
\end{array}
\vspace{-1mm}
\end{displaymath}
Since no other individuals newly adopt the concert ticket, i.e., $V_d^{(1)}=\emptyset$, the propagation stops. Finally, $IF(d,\cdot)=IF^{(1)}(d,\cdot)$ are as above. \exend
\end{example}

\item \textbf{Potential-Buyer Filtering. }

Once the advantage of selecting an individual as a seed can be
estimated from the aspect of another's adoption, intuitively, the
importance of an individual can be derived by summing up his/her
importance towards all the others. However, counting one's importance
towards all the others may be misleading, since the advantage estimated from those who will never adopt the commodity at this price is also counted. To avoid overestimating the
importance of an individual in the promotion, PRUB+IF thus incorporates the
concept of potential buyers into the pricing-sensitive importance which is 
formally defined below.
\vspace{-1mm}
\begin{equation}
\label{E:psi}
\Psi(u) = \sum\limits_{i\in V_{\text{potential}}}{IF(u,i)}
\vspace{-1mm}
\end{equation}
where $V_{\text{potential}}$ is the set of potential buyers at this price, i.e., the individuals
whose maximum valuation is larger than or equal
to the price (the same as introduced in Section~\ref{Sec:framework}). As a result, only if an individual is a potential buyer who has potential for adopting the commodity at the price, the increase of his/her valuation
will be regarded as important.

\begin{example}
\label{Ex:psi}
Follow Example~\ref{Ex:propagate}. Since the potential buyers at the pricing of $\$7$ are $a$, $b$, $c$, and $e$ (referred to Table~\ref{T:maximalBudgets}), the pricing-sensitive
importance of $d$ is $\Psi(d) = 1+\frac{6}{7}+\frac{3}{4}+0=\frac{73}{28}$.
\exend
\end{example}


\end{enumerate}

\setlength{\algomargin}{2.5ex}
\begin{algorithm}[t]
\caption{PRUB+IF}
\label{Algo:contribution}
\scalefont{0.9}
\DontPrintSemicolon
\SetNlSty{texttt}{}{}
\SetAlgoNlRelativeSize{-3}
\SetNlSkip{0.25em}
\SetVlineSkip{1pt}
\KwIn{A monetizing social network $G=(V, X, E, W, F)$; a set of input prices $P$; a quantity of commodities $n$.}
\KwOut{The pricing $p_{\text{max}}$; the seed group $A_{\text{max}}$.}
$p_{\text{max}}\gets 0$, $A_{\text{max}}\gets\emptyset$, $r_{\text{global}} \gets 0$\;
\label{code:fram:init1}
Derive $R_{\text{bound}}(n,p)$ for all $p\in P$\;
Sort all $p \in P$ descendingly by $R_{\text{bound}}(n,p)$\;
\label{code:fram:init2}
\For{$p \in P$}{
	\If{$p$ is non-candidate pricing}{
	\label{code:fram:prune4}
		\KwRet{$p_{\text{max}}$, $A_{\text{max}}$}
	\label{code:fram:prune5}
	}
	$A \gets \emptyset$\;
	\label{code:fram:empty1}
	Compute $R(n,p,A)$\;
	\label{code:fram:empty3}	
	\If{$R(n,p,A)>r_{\text{global}}$}{
	\label{code:fram:update5}
		$p_{\text{max}}=p$, $A_{\text{max}}=A$, $r_{\text{global}}=R(n,p,A)$\;
	\label{code:fram:empty2}
	}
	\While{$|A| < n-\frac{r_{\text{global}}}{p}$}{
	\label{code:fram:prune6}
		\For{$u\in V$ has not adopted the commodity}{
		\label{code:fram:differ3}
			Compute $\Psi(u)$ according to Equation (\ref{E:psi})\;
			\label{code:fram:psi}
		}
		$s\gets$ the individual with the greatest $\Psi(\cdot)$	\;
		$A \gets A\cup\{s\}$ \;
		\label{code:fram:differ4}
		Compute $R(n,p,A)$\;
		\If{$r>r_{\text{global}}$}{
		\label{code:fram:update3}
		$p_{\text{max}}=p$, $A_{\text{max}}=A$, $r_{\text{global}}=R(n,p,A)$\;		\label{code:fram:update4}
		}
	}	
}
\KwRet{$p_{\text{max}}$, $A_{\text{max}}$}
\end{algorithm}

The algorithm of PRUB+IF is presented in
Algorithm~\ref{Algo:contribution}. PRUB+IF uses the framework of
PRUB as a base, and differs from PRUB at the strategy for seed
selection (Algorithm~\ref{Algo:contribution} Lines~\ref{code:fram:differ3}-\ref{code:fram:differ4}). 
In the following, we go through the details of PRUB+IF.
Similarly to PRUB, PRUB+IF initializes $p_\text{max}$, $A_\text{max}$, and $r_{\text{global}}$, compute the upper
bounds of maximum revenue for all prices, and then considers the prices
one-by-one in a descending order of the upper bounds of maximum revenue
(Lines~\ref{code:fram:init1}-\ref{code:fram:init2}). At a specific price $p$, PRUB enumerates all seed
groups while PRUB+IF iteratively picks up a seed until the size of the seed group $A$ reaches the bound in Equation (\ref{E:ABound}) (Line~\ref{code:fram:prune6}).
Specifically, when considering a price $p$, PRUB+IF first checks
whether or not $p$ is non-candidate pricing that has no chance to yield higher revenue than the achievable global revenue
$r_{\text{global}}$ (Line~\ref{code:fram:prune4}).
If $p$ passes
the pruning, PRUB+IF first calculates the revenue regarding to the
empty seed group (Lines~\ref{code:fram:empty1}-\ref{code:fram:empty3}). After that, PRUB+IF expands the
seed group by one seed at a time according to the pricing-sensitive
importance $\Psi(\cdot)$ in Equation
(\ref{E:psi}) (Line~\ref{code:fram:psi}). 
Each time the individual who has not yet adopted the commodity and has the greatest pricing-sensitive
importance will be selected into the current seed group (Lines~\ref{code:fram:differ3}-\ref{code:fram:differ4}).
In the case of the newly obtained revenue higher than $r_{\text{global}}$, PRUB+IF updates $p_{\text{max}}$, $A_{\text{max}}$, and $r_{\text{global}}$ similarly to PRUB (Lines~\ref{code:fram:update3}-\ref{code:fram:update4}).
Once the size of the seed
group reaches the bound in Equation (\ref{E:ABound}) (Line~\ref{code:fram:prune6}), PRUB+IF
considers the next price. After all prices are examined,
$p_{\text{max}}$ and $A_{\text{max}}$ are returned as the solution.
The following example demonstrates the searching process of PRUB+IF.

\renewcommand{\arraystretch}{1.2}

\begin{example}

Follow Example~\ref{Ex:framework}, and initialize $p_{\text{max}}$, $A_{\text{max}}$, and $r_{\text{global}}$ as $\$0$, an
empty set, and $\$0$, respectively.
According to the upper bounds of maximum revenue in
Table~\ref{T:maximalRevenues}, the first price considered is $\$7$.
Since $R_\text{bound}(4,\$7) = \$28 > r_{\text{global}} = \$0$,
PRUB+IF tries to seek for higher revenue at the price $\$7$. First, for $A =
\{ \}$, there is no revenue since the inherent valuations of all
individuals are less than $\$7$. After that, for selecting the first seed, PRUB+IF 
computes the pricing-sensitive importance
of all individuals in a similar way as shown in Example~\ref{Ex:psi}:
\vspace{-1mm}
\begin{displaymath}
\begin{array}{lll}
\Psi(a) = \frac{29}{28}, &
\Psi(b) = \frac{1}{5}, &
\Psi(c) = 0, \\
\Psi(d) = \frac{73}{28}, &
\Psi(e) = \frac{15}{14}, &
\Psi(f) = \frac{65}{28}. \\
\end{array}
\vspace{-1mm}
\end{displaymath}
Hence, the first seed selected is $d$ and the corresponding revenue
will be $R(4,\$7,\{d\})=\$7$ (referred to Example~\ref{Ex:ex1}).
As $\$7 > r_{\text{global}} = \$0$, PRUB+IF performs the updates of
$p_{\text{max}}=\$7$, $A_{\text{max}}=\{d\}$, and $r_{\text{global}}=\$7$. Until now, the process for the first seed
finishes, and PRUB+IF checks whether or not to continue selecting
the second seed.
Since $|A|=|\{d\}| =1 < n-\frac{r_{\text{global}}}{p}=4-\frac{7}{7}=3$, the
pricing-sensitive importance of all individuals, except the
individuals who have adopted the concert tickets, i.e., $a$ and $d$, is
calculated with respect to the current seed group $A=\{d\}$: 
\vspace{-1mm}
\begin{displaymath}
\begin{array}{llll}
\Psi(b) = 0, &
\Psi(c) = 0, &
\Psi(e) = 2, &
\Psi(f) = 3. \\
\end{array}
\vspace{-1mm}
\end{displaymath}
Accordingly, $f$ will be selected as the second seed and the revenue
is $R(4,\$7,\{d,f\})=\$14$ (referred to Example~\ref{Ex:ex1}),
which leads to the updates of $p_{\text{max}}=\$7$, $A_{\text{max}}=\{d,f\}$, and $r_{\text{global}}=\$14$. Until now, the
process for the second seed finishes, and PRUB+IF checks whether or not to continue selecting the third seed.
However, $|A|=2$ reaches the bound
$n-\frac{r_{\text{global}}}{p}=4-\frac{14}{7}=2$ so that the searching
at the price $\$7$ stops.

The searching at the remaining prices $p=\$6$, $p=\$8$, $\cdots$, and so on, is performed in
the same manner as that at the price $\$7$.
In the end, PRUB+IF will report the pricing of $\$6$ and the seed group
$\{d\}$ as the answer. \exend
\end{example}

\begin{theorem}
\label{Theo:complexity2}
The time complexity of PRUB+IF is $O(n|V|^3|P|)$.
\end{theorem}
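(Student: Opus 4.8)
The plan is to mirror the proof of Theorem~\ref{Theo:complexity}: first bound the cost PRUB+IF spends at a single price, then multiply by the factor $|P|$ coming from the outer \textbf{for} loop over all $p\in P$. So it suffices to show that processing one price costs $O(n|V|^3)$.

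At a fixed price $p$, PRUB+IF grows the seed group one seed at a time inside the \textbf{while} loop (Line~\ref{code:fram:prune6}). Since each pass appends exactly one individual to $A$ and the loop continues only while $|A| < n-\frac{r_{\text{global}}}{p}\leq n$, the \textbf{while} loop executes at most $n$ times. Hence I would bound the cost of a single pass and multiply by $n$. The dominant work in each pass is the recomputation of the pricing-sensitive importance $\Psi(u)$ for every individual $u\in V$ that has not yet adopted the commodity (Lines~\ref{code:fram:differ3}--\ref{code:fram:psi}); selecting the maximizer, inserting it into $A$, and recomputing $R(n,p,A)$ are all cheaper.

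The key step is therefore to bound the cost of computing a single $\Psi(u)$, which by Equation~(\ref{E:psi}) reduces to computing the final importances $IF(u,i)$ of $u$ towards the potential buyers $i$ through the recursion in Equation~(\ref{E:IF}). I would argue as follows. Starting from $IF^{(0)}(u,\cdot)=\hat{w}_{u\cdot}$, each propagation step $k$ updates $IF^{(k+1)}(u,v)$ for all $v$ by adding the contributions $\hat{w}_{iv}$ of the newly adopting individuals $i\in V_u^{(k)}$. Because $V_u^{(k)}$ collects precisely the individuals whose importance first reaches $1$ at step $k$, every individual lies in at most one $V_u^{(k)}$, so $\sum_k|V_u^{(k)}|\leq|V|$. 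Charging $O(|V|)$ work to propagate each newly adopting individual's weights to all $v$, the total cost over the entire cascade is $|V|\cdot\sum_k|V_u^{(k)}| = O(|V|^2)$, and the final summation over potential buyers in Equation~(\ref{E:psi}) adds only $O(|V|)$. Thus one $\Psi(u)$ costs $O(|V|^2)$, computing $\Psi(\cdot)$ for all $u\in V$ costs $O(|V|^3)$ per \textbf{while} pass, which gives $O(n|V|^3)$ per price and $O(n|V|^3|P|)$ overall.

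The main obstacle is the bound on the cascade in the third paragraph: one must argue that the recursion in Equation~(\ref{E:IF}) terminates after finitely many steps and that the total propagation work telescopes to $O(|V|^2)$ per source rather than blowing up. The crucial observation that makes this go through is that the sets $V_u^{(k)}$ are disjoint across $k$, since an individual reaches importance $1$ exactly once; their sizes therefore sum to at most $|V|$, keeping the per-source cost at $O(|V|^2)$ and the stated bound tight.
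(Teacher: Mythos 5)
Your proof is correct and takes essentially the same route as the paper's: bound the cost at a single price by $n$ while-loop passes, each requiring $O(|V|)$ pricing-sensitive-importance computations at $O(|V|^2)$ apiece, and then multiply by $|P|$. The only difference is that the paper simply asserts that one computation of $\Psi(u)$ costs $O((|V|-i)^2)$ after $i$ seeds are selected, whereas you justify this cascade bound explicitly via the disjointness of the sets $V_u^{(k)}$ — a detail the paper leaves implicit, so your argument is, if anything, slightly more complete.
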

\begin{proof}
We first show that PRUB+IF costs $O(n|V|^3)$ time at a specific price for
searching the maximum revenue. Given the quantity of commodities
$n$, the largest size of the seed group is $n$. 
When $i$ seeds have been selected, it takes $O((|V|-i)^2)$ time to compute the pricing-sensitive importance for each of the other $O(|V|-i)$ nodes.
Therefore, the total time complexity for
searching the maximum revenue is
\begin{displaymath}
O(\sum\limits_{i=1}^{n-1}{(|V|-i)^3})=O(n|V|^3).
\end{displaymath}
As searching at a price costs $O(n|V|^3)$, the total time complexity of
PRUB+IF is thus $O(n|V|^3|P|)$. Theorem~\ref{Theo:complexity2} is proved.
\end{proof}

\vspace{-5mm}
\section{Experiments}

\begin{table*}[t]
\vspace{-5mm}
\caption{Summaries of real datasets and the parameters of each valuation distribution.}
\centering
\small
\label{T:summary}
\begin{tabular}{c||rrrrr||cc||cc|cc}
\multirow{2}{*}{Dataset} & \multirow{2}{*}{\#nodes} & \multirow{2}{*}{\#edges} & \multirow{2}{*}{Avg. degree} & \multirow{2}{*}{Avg. weight} & \multirow{2}{*}{Avg. clustering coeff.} & \multicolumn{2}{c||}{Normal} & \multicolumn{4}{c}{M-shape}\\ 
&&&&&& $\mu$ & $\sigma^2$ & $\mu$ & $\sigma^2$ & $\mu$ & $\sigma^2$ \\ \hline
\textit{highschool} & 50 & 140 & 5.60 & 37.29 & 0.4754 & 10 & 8.16 & 4 & 1.78 & 16 & 1.78\\
\textit{digg} & 30,360 & 85,155 & 5.61 & 6.07 & 0.0053 & 5 & 2.04 & 2 & 0.44 & 8 & 0.44 \\
\textit{facebook} & 45,813 & 183,412 & 8.01 & 4.66 & 0.1106 & 5 & 2.04 & 2 & 0.44 & 8 & 0.44 \\
\end{tabular}
\vspace{-6mm}
\end{table*}

In this section, we first provide the evidence that the revenue maximization problem with a quantity constraint differs from that without a quantity constraint. Besides, we also show the approximation performance of PRUB+IF to PRUB. In order to demonstrate the effectiveness of PRUB+IF, we further conduct experiments on larger real social networks and compare PRUB+IF with other heuristics. All the programs are implemented in Java with version
1.6.0\_27. The experiments are performed on a quad-core Intel$^\circledR$ Xeon$^\circledR$
X5450 3.00GHz PC with 8GB RAM using Ubuntu 8.04.2.

{\bf{Datasets.}} We use three real social networks as follows. 
1) The dataset \textit{highschool} (sampled from the dataset \cite{Coleman1964} due to poor scalability of PRUB and the compared approach) is a friendship network for the boys in a small highschool in Illinois.\footnote{\url{http://moreno.ss.uci.edu/data.html\#high}} The nodes are boys in the highschool, and each directed edge represents a boy choosing another as a friend in questionnaires. 
2) The dataset \textit{digg} \cite{Choudhury2009} is a reply network of Digg.\footnote{\url{http://www.public.asu.edu/~mdechoud/datasets.html} (The reply system has been out of service since 2012, but the data were collected in 2008.)} Each node is a user in Digg and each directed edge stands for a user replying to another.
3) The dataset \textit{facebook} \cite{Viswanath2009} is the communication network of Facebook.\footnote{\url{http://socialnetworks.mpi-sws.org/data-wosn2009.html}} The nodes are the users in
Facebook while each directed edge represents that a user posts an
article on another user's wall. 
For these datasets, the self-loops and isolated nodes are removed, and the edge weights are derived from the number of choices, replies, and articles of one to another, respectively. 
Both the degree distributions and weight distributions of all the datasets follow the well-known power law \cite{Clauset2009}. 
More details are shown in the left five columns of Table~\ref{T:summary}. 

{\bf{Valuation Distributions.}} Note that there is lack of valuation
information in these publicly available datasets. A na\"{i}ve way is to learn the valuation information through questionnaires or historical sales data toward a specific commodity. 
However, note that the valuation distributions vary with commodities. 
In this paper, we then explore the performance
under different valuation distributions as follows.

\begin{itemize}
\vspace{-2mm}
\item Normal distribution (N): Most people have similar valuations and a few people have extremely higher or lower valuations toward a commodity.
\vspace{-1mm}
\item M-shape distribution (M) \cite{Ohmaa2006}: People do not have consensus on the value of a commodity. One group of people have higher valuations and the other group of people have lower valuations toward the commodity. 
\end{itemize}
\vspace{-1mm}
In our experiments, we implement the M-shape distribution using two Normal distributions, where the valuation of each individual follows one of the Normal distributions with equal probabilities. 
The parameters for generating the valuations are listed in the right three columns of Table~\ref{T:summary}.

\textbf{Concave influence function.} As introduced in Section~\ref{Sec:PF}, any non-negative, non-decreasing, and concave function can be applied to the concave influence function. In our experiments, we consider the function $F(x)=x$ to transform the weights into the valuation concept for simplicity.

\textbf{Input prices.} 
By considering different currencies, costs, and cultures in different countries, the pricing of a commodity may be different. 
For simplicity, we set the prices as input parameters in the experiments. 
For \textit{highschool}, the input prices are all integers (as the unit of the dollars) in $[1, 300]$, due to the smaller size of the network. For \textit{digg} and \textit{facebook}, the input prices are all integers in $[1, 2000]$.

\subsection{Consideration of Quantity Constraints}
\label{Sec:exp1}

\begin{figure}[t]
\centering
\subfigure[][\textit{highschool}(N)]{
\centering
\label{Fig:smallN}
\begin{tikzpicture}[-]
\scriptsize
\begin{axis}[
ybar=0pt,
enlarge x limits=0.08,
bar width=0.08cm,
legend cell align=left,
legend image code/.code={\draw[#1] (0cm,-0.1cm) rectangle (0.15cm,0.1cm);}, 
legend style={at={(0,1), font=\tiny},anchor=north west,legend columns=1, draw=none, fill=none},
ylabel style={text width=4cm, align=center, yshift=-0.7cm},
ylabel=Max. revenue (hundred dollars),
symbolic x coords={0.05, 0.1, 0.15, 0.2, 0.25, 0.3, 1},
xlabel style={align=center},
xlabel={$\frac{n}{|V|}$},
xtick=data,
x tick style={opacity=0},
ymax=30,
ymin=0,
x=0.5cm,
height=4cm
]
\addplot
[pattern=my grid, pattern color=cyan]
 coordinates {
(0.05, 0)
(0.1, 0)
(0.15, 0)
(0.2, 0.72)
(0.25, 1.44)
(0.3, 2.52)
(1, 21.39)
};
\addplot
[pattern=north east lines, pattern color=red]
 coordinates {
(0.05, 0.20)
(0.1, 2.76)
(0.15, 3.68)
(0.2, 5.46)
(0.25, 5.46)
(0.3, 8.10)
(1, 21.39)
};
\addplot
[pattern=crosshatch dots,pattern color=blue]
 coordinates {
(0.05, 0.90)
(0.1, 5.07)
(0.15, 7.44)
(0.2, 8.55)
(0.25, 10.26)
(0.3, 12.54)
(1, 27.09)
};
\addplot
[pattern=north west lines, pattern color=green]
 coordinates {
(0.05, 0.92)
(0.1, 5.07)
(0.15, 7.44)
(0.2, 8.55)
(0.25, 11.28)
(0.3, 13.16)
};
\legend{RandomizedUSM, RandomizedUSM(p), PRUB+IF, PRUB}
\end{axis}
\end{tikzpicture}
}%
\subfigure[][\textit{highschool}(M)]{
\centering
\label{Fig:smallM}
\begin{tikzpicture}[-]
\scriptsize
\begin{axis}[
ybar=0pt,
enlarge x limits=0.08,
bar width=0.09cm,
legend cell align=left,
legend image code/.code={\draw[#1] (0cm,-0.1cm) rectangle (0.15cm,0.1cm);}, 
legend style={at={(0,1), font=\tiny},anchor=north west,legend columns=1, draw=none, fill=none},
ylabel style={text width=4cm, align=center, yshift=-0.7cm},
ylabel=Max. revenue (hundred dollars),
symbolic x coords={0.05, 0.1, 0.15, 0.2, 0.25, 0.3, 1},
xlabel style={align=center},
xlabel={$\frac{n}{|V|}$},
xtick=data,
x tick style={opacity=0},
ymax=30,
ymin=0,
x=0.5cm,
height=4cm
]
\addplot
[pattern=my grid, pattern color=cyan]
 coordinates {
(0.05, 0)
(0.1, 0)
(0.15, 0)
(0.2, 1.05)
(0.25, 1.75)
(0.3, 2.80)
(1,19.6)
};
\addplot
[pattern=north east lines, pattern color=red]
 coordinates {
(0.05, 0.95)
(0.1, 2.82)
(0.15, 4.70)
(0.2, 4.70)
(0.25, 5.95)
(0.3, 8.50)
(1,19.6)
};
\addplot
[pattern=crosshatch dots,pattern color=blue]
 coordinates {
(0.05, 0.85)
(0.1, 4.95)
(0.15, 7.29)
(0.2, 8.25)
(0.25, 9.96)
(0.3, 12.36)
(1,25.27)
};
\addplot
[pattern=north west lines, pattern color=green]
 coordinates {
(0.05, 0.95)
(0.1, 4.95)
(0.15, 7.29)
(0.2, 8.25)
(0.25, 10.98)
(0.3, 12.74)
};
\legend{RandomizedUSM, RandomizedUSM(p), PRUB+IF, PRUB}
\end{axis}
\end{tikzpicture}
}%
\vspace{-4mm}
\caption{The comparison for maximum revenue.}
\label{Fig:small}
\end{figure}

In this subsection, we first show that the consideration of a quantity constraint in the revenue maximization problem makes it differ from the original problem by comparing PRUB with the previous work \cite{Mirrokni2012}, referred to as RandomizedUSM in this paper. Due to poor scalability of PRUB and RandomizedUSM, these experiments are conducted on the small dataset \textit{highschool} with Normal and M-shape valuation distributions. 
In addition, note that since RandomizedUSM does not consider a quantity constraint, we derive the revenue with two kinds of post-processes as follows. 

\begin{itemize}
\item RandomizedUSM: Given the solution pair of $p$ and $A$ RandomizedUSM reports, 
this post-process considers the quantity constraint $n$ by calculating the quantities left for sale and computing the revenue as $p \times \min\{|\sigma(A)\setminus A|, n-|A|\}$, where $\sigma(A) \supseteq A$ is the set of individuals adopting the commodity under $A$'s effects.
\item RandomizedUSM(p): This post-process considers the quantity constraint $n$ at each price and then returns the highest revenue among all prices. For each price $p$ and the corresponding seed group $A_p$ obtained by RandomizedUSM, the revenue is $p \times \min\{|\sigma(A_p)\setminus A_p|, n-|A_p|\}$, where $\sigma(A_p) \supseteq A_p$ is the set of individuals adopting the commodity under $A_p$'s effects.
\end{itemize}

Figure~\ref{Fig:small} shows the maximum revenue reported by RandomizedUSM, RandomizedUSM(p), PRUB+IF and PRUB with respect to different ratios of the quantity of commodities to the number of population, i.e., $\frac{n}{|V|}$. 
Accordingly, the revenue reported by 
RandomizedUSM and RandomizedUSM(p) is, on average, less than 70\% of that obtained by PRUB. 
Moreover, note that RandomizedUSM has the poorest performance and even brings zero revenue at the cases of the supply ratio below 0.2. 
These results demonstrate that the capability of the previous approach \cite{Mirrokni2012} (even with the post-processes) is limited for the RM$_\text{w/QC}$ problem.
On the other hand, even for the special case that the quantity is not constrained, i.e., the ratio of the quantity of commodities to the number of population $\frac{n}{|V|}=1$, PRUB+IF still outperforms RandomizedUSM and RandomizedUSM(p). 
This is because PRUB+IF selects seeds in order of the pricing-sensitive importance while RandomizedUSM and RandomizedUSM(p) consider the seeds in arbitrary order.

On the same datasets, we then also explore the approximation of PRUB+IF to PRUB (which derives the optimal solutions). According to the results in Figure~\ref{Fig:small}, the performance of PRUB+IF is very close to that of PRUB, i.e., about 96\% on average, and outperforms both RandomizedUSM and RandomizedUSM(p). 
This implies that the proposed pricing-sensitive improtance of PRUB+IF is highly effective to identity the crucial individuals as seeds.  

\subsection{Performance of the PRUB+IF heuristic}

In this subsection, we conduct experiments on larger datasets \textit{digg} and \textit{facebook} to further show the performance of Algorithm PRUB+IF in terms of effectiveness, pricing, and efficiency. 
Here PRUB+IF is compared with two other heuristics PRUB+R and PRUB+SW. These two heuristics also take the PRUB approach as the framework while applying different strategies for seed selection. Specifically, PRUB+R (R standing for Random) selects seeds randomly; PRUB+SW (SW meaning Sum of the Weights) always picks up the individual with the maximum sum of out-weights from those who have not yet adopted the commodity.
Besides, in order to demonstrate the advantage of each part of PRUB+IF, the heuristics PRUB+N, PRUB+F, and PRUB+P represent the approaches that incorporate Normalized Weight, Feedback of Influence Propagation, and Potential-buyer Filtering, respectively, for seed selection.

\begin{figure*}[t]
\vspace{-5mm}
\centering
\subfigure[][\textit{digg}(N)]{
\centering
\label{Fig:eDiggN}
\begin{tikzpicture}[-]
\scriptsize
\begin{axis}[
ybar=0pt,
enlarge x limits=0.12,
bar width=0.1cm,
legend cell align=left,
legend image code/.code={\draw[#1] (0cm,-0.1cm) rectangle (0.15cm,0.1cm);}, 
legend style={area legend, at={(1,1.02), font=\tiny},anchor=north east,legend columns=1, draw=none, fill=none},
ylabel style={text width=4.2cm, align=center, yshift=-0.7cm},
ylabel=Max. revenue\\(the ratio to NoSocial),
symbolic x coords={0.05, 0.1, 0.15, 0.2, 0.25, 0.3},
xlabel style={align=center},
xlabel={$\frac{n}{|V|}$},
xtick=data,
x tick style={opacity=0},
ymax=4,
ymin=1,
x=0.5cm,
height=4.2cm
]
\addplot
[pattern=crosshatch dots,pattern color=blue]
 coordinates {
(0.05, 3.3126294)
(0.1, 2.826933936)
(0.15, 2.66075167)
(0.2, 2.605621432)
(0.25, 2.366710584)
(0.3, 2.143244035)
};
\addplot
[pattern=crosshatch, pattern color=orange]
 coordinates {
(0.05, 2.756446452)
(0.1, 2.280538302)
(0.15, 2.046123683)
(0.2, 1.956960913)
(0.25, 1.708915239)
(0.3, 1.575483092)
};
\addplot
[pattern=horizontal lines, pattern color=gray]
 coordinates {
(0.05, 1.718784114)
(0.1, 1.596437982)
(0.15, 1.595746519)
(0.2, 1.65410628)
(0.25, 1.53589372)
(0.3, 1.429765042)
};
\legend{PRUB+IF, PRUB+SW, PRUB+R}
\end{axis}
\end{tikzpicture}
}%
\subfigure[][\textit{digg}(M)]{
\centering
\label{Fig:eDiggM}
\begin{tikzpicture}[-]
\scriptsize
\begin{axis}[
ybar=0pt,
enlarge x limits=0.12,
bar width=0.1cm,
legend cell align=left,
legend image code/.code={\draw[#1] (0cm,-0.1cm) rectangle (0.15cm,0.1cm);}, 
legend style={area legend, at={(1,1.02), font=\tiny},anchor=north east,legend columns=1, draw=none, fill=none},
ylabel style={text width=4.2cm, align=center, yshift=-0.7cm},
ylabel=Max. revenue\\(the ratio to NoSocial),
symbolic x coords={0.05, 0.1, 0.15, 0.2, 0.25, 0.3},
xlabel style={align=center},
xlabel={$\frac{n}{|V|}$},
xtick=data,
x tick style={opacity=0},
ymax=4,
ymin=1,
x=0.5cm,
height=4.2cm
]
\addplot
[pattern=crosshatch dots,pattern color=blue]
 coordinates {
(0.05, 2.640901771)
(0.1, 2.278985507)
(0.15, 2.263394818)
(0.2, 2.028573781)
(0.25, 1.848666008)
(0.3, 1.690135046)
};
\addplot
[pattern=crosshatch, pattern color=orange]
 coordinates {
(0.05, 2.243741765)
(0.1, 1.861367296)
(0.15, 1.815985946)
(0.2, 1.592926548)
(0.25, 1.4578722)
(0.3, 1.311923584)
};
\addplot
[pattern=horizontal lines, pattern color=gray]
 coordinates {
(0.05, 1.527477675)
(0.1, 1.44948031)
(0.15, 1.507743193)
(0.2, 1.390577652)
(0.25, 1.275802042)
(0.3, 1.173671498)
};
\legend{PRUB+IF, PRUB+SW, PRUB+R}
\end{axis}
\end{tikzpicture}
}%
\subfigure[][\textit{facebook}(N)]{
\centering
\label{Fig:eFbN}
\begin{tikzpicture}[-]
\scriptsize
\begin{axis}[
ybar=0pt,
enlarge x limits=0.12,
bar width=0.1cm,
legend cell align=left,
legend image code/.code={\draw[#1] (0cm,-0.1cm) rectangle (0.15cm,0.1cm);}, 
legend style={area legend, at={(1,1.02), font=\tiny},anchor=north east,legend columns=1, draw=none, fill=none},
ylabel style={text width=4.2cm, align=center, yshift=-0.7cm},
ylabel=Max. revenue\\(the ratio to NoSocial),
symbolic x coords={0.05, 0.1, 0.15, 0.2, 0.25, 0.3},
xlabel style={align=center},
xlabel={$\frac{n}{|V|}$},
xtick=data,
x tick style={opacity=0},
ymax=8.5,
ymin=1,
x=0.5cm,
height=4.2cm
]
\addplot
[pattern=crosshatch dots,pattern color=blue]
 coordinates {
(0.05,8.24279476)
(0.1,5.878192534)
(0.15,4.577572406)
(0.2,4.418402096)
(0.25,3.57680957)
(0.3,2.980804773)
};
\addplot
[pattern=crosshatch, pattern color=orange]
 coordinates {
(0.05,6.009606987)
(0.1,4.163938005)
(0.15,3.183379421)
(0.2,3.053459943)
(0.25,2.535161093)
(0.3,2.161100196)
};
\addplot
[pattern=horizontal lines, pattern color=gray]
 coordinates {
(0.05,2.449534207)
(0.1,1.988226734)
(0.15,1.716428953)
(0.2,1.813870334)
(0.25,1.627713263)
(0.3,1.4843222)
};
\legend{PRUB+IF, PRUB+SW, PRUB+R}
\end{axis}
\end{tikzpicture}
}%
\subfigure[][\textit{facebook}(M)]{
\centering
\label{Fig:eFbM}
\begin{tikzpicture}[-]
\scriptsize
\begin{axis}[
ybar=0pt,
enlarge x limits=0.12,
bar width=0.1cm,
legend cell align=left,
legend image code/.code={\draw[#1] (0cm,-0.1cm) rectangle (0.15cm,0.1cm);}, 
legend style={area legend, at={(1,1.02), font=\tiny},anchor=north east,legend columns=1, draw=none, fill=none},
ylabel style={text width=4.2cm, align=center, yshift=-0.7cm},
ylabel=Max. revenue\\(the ratio to NoSocial),
symbolic x coords={0.05, 0.1, 0.15, 0.2, 0.25, 0.3},
xlabel style={align=center},
xlabel={$\frac{n}{|V|}$},
xtick=data,
x tick style={opacity=0},
ymax=8.5,
ymin=1,
x=0.5cm,
height=4.2cm
]
\addplot
[pattern=crosshatch dots,pattern color=blue]
 coordinates {
(0.05,6.18209607)
(0.1,4.452412137)
(0.15,3.45888517)
(0.2,2.781816197)
(0.25,2.264057452)
(0.3,1.886796915)
};
\addplot
[pattern=crosshatch, pattern color=orange]
 coordinates {
(0.05,4.520960699)
(0.1,3.160609037)
(0.15,2.409310872)
(0.2,1.933939096)
(0.25,1.617905789)
(0.3,1.382367387)
};
\addplot
[pattern=horizontal lines, pattern color=gray]
 coordinates {
(0.05,1.922751092)
(0.1,1.589030779)
(0.15,1.370462815)
(0.2,1.197219494)
(0.25,1.085406662)
(0.3,1.007929491)
};
\legend{PRUB+IF, PRUB+SW, PRUB+R}
\end{axis}
\end{tikzpicture}
}%
\vspace{-4mm}
\caption{The effectiveness comparison.}
\vspace{-5mm}
\label{Fig:effectiveness}
\end{figure*}

\begin{figure*}[t]
\vspace{-5mm}
\centering
\subfigure[][\textit{digg}(N)]{
\centering
\label{Fig:3DiggN}
\begin{tikzpicture}[-]
\scriptsize
\begin{axis}[
ybar=0pt,
enlarge x limits=0.12,
bar width=0.085cm,
legend cell align=left,
legend image code/.code={\draw[#1] (0cm,-0.1cm) rectangle (0.15cm,0.1cm);}, 
legend style={area legend, at={(1,1.02), font=\tiny},anchor=north east,legend columns=1, draw=none, fill=none},
ylabel style={text width=4.2cm, align=center, yshift=-0.7cm},
ylabel=Max. revenue\\(the ratio to NoSocial),
symbolic x coords={0.05, 0.1, 0.15, 0.2, 0.25, 0.3},
xlabel style={align=center},
xlabel={$\frac{n}{|V|}$},
xtick=data,
x tick style={opacity=0},
ymax=4,
ymin=1,
x=0.5cm,
height=4.2cm
]
\addplot
[pattern=crosshatch dots,pattern color=blue]
 coordinates {
(0.05, 3.3126294)
(0.1, 2.826933936)
(0.15, 2.66075167)
(0.2, 2.605621432)
(0.25, 2.366710584)
(0.3, 2.143244035)
};
\addplot
[pattern=north east lines,pattern color=Emerald]
 coordinates {
(0.05,3.14323357801619)
(0.1,2.66534914361001)
(0.15,2.46475924633636)
(0.2,2.37252964426877)
(0.25,2.10579710144928)
(0.3,1.89975845410628)
};
\addplot
[pattern=my grid,pattern color=Goldenrod]
 coordinates {
(0.05,3.10276679841897)
(0.1,2.52719744024092)
(0.15,2.29262619213771)
(0.2,2.19625603864734)
(0.25,1.92547211242863)
(0.3,1.71629336846728)
};
\addplot
[pattern=north west lines,pattern color=Fuchsia]
 coordinates {
(0.05,2.74308300395257)
(0.1,2.25569358178054)
(0.15,2.04216927524674)
(0.2,1.94115063680281)
(0.25,1.70364514712341)
(0.3,1.57347020933977)
};
\legend{PRUB+IF, PRUB+N, PRUB+F, PRUB+P}
\end{axis}
\end{tikzpicture}
}%
\subfigure[][\textit{digg}(M)]{
\centering
\label{Fig:3DiggM}
\begin{tikzpicture}[-]
\scriptsize
\begin{axis}[
ybar=0pt,
enlarge x limits=0.12,
bar width=0.085cm,
legend cell align=left,
legend image code/.code={\draw[#1] (0cm,-0.1cm) rectangle (0.15cm,0.1cm);}, 
legend style={area legend, at={(1,1.02), font=\tiny},anchor=north east,legend columns=1, draw=none, fill=none},
ylabel style={text width=4.2cm, align=center, yshift=-0.7cm},
ylabel=Max. revenue\\(the ratio to NoSocial),
symbolic x coords={0.05, 0.1, 0.15, 0.2, 0.25, 0.3},
xlabel style={align=center},
xlabel={$\frac{n}{|V|}$},
xtick=data,
x tick style={opacity=0},
ymax=4,
ymin=1,
x=0.5cm,
height=4.2cm
]
\addplot
[pattern=crosshatch dots,pattern color=blue]
 coordinates {
(0.05,2.64090177133655)
(0.1,2.27898550724638)
(0.15,2.26339481774264)
(0.2,2.02857378129117)
(0.25,1.84866600790514)
(0.3,1.69013504611331)

};
\addplot
[pattern=north east lines,pattern color=Emerald]
 coordinates {
(0.05,2.54779680866637)
(0.1,2.14880690967648)
(0.15,2.10391963109354)
(0.2,1.85894268774704)
(0.25,1.66872529644269)
(0.3,1.51162439613527)

};
\addplot
[pattern=my grid,pattern color=Goldenrod]
 coordinates {
(0.05,2.46991655687308)
(0.1,2.06258234519104)
(0.15,1.99824330259113)
(0.2,1.76836297760211)
(0.25,1.54318181818182)
(0.3,1.43220245937637)

};
\addplot
[pattern=north west lines,pattern color=Fuchsia]
 coordinates {
(0.05,2.23042014346362)
(0.1,1.84892402283707)
(0.15,1.81324110671937)
(0.2,1.59177371541502)
(0.25,1.45326086956522)
(0.3,1.31057861220905)

};
\legend{PRUB+IF, PRUB+N, PRUB+F, PRUB+P}
\end{axis}
\end{tikzpicture}
}%
\subfigure[][\textit{facebook}(N)]{
\centering
\label{Fig:3FbN}
\begin{tikzpicture}[-]
\scriptsize
\begin{axis}[
ybar=0pt,
enlarge x limits=0.12,
bar width=0.085cm,
legend cell align=left,
legend image code/.code={\draw[#1] (0cm,-0.1cm) rectangle (0.15cm,0.1cm);}, 
legend style={area legend, at={(1,1.02), font=\tiny},anchor=north east,legend columns=1, draw=none, fill=none},
ylabel style={text width=4.2cm, align=center, yshift=-0.7cm},
ylabel=Max. revenue\\(the ratio to NoSocial),
symbolic x coords={0.05, 0.1, 0.15, 0.2, 0.25, 0.3},
xlabel style={align=center},
xlabel={$\frac{n}{|V|}$},
xtick=data,
x tick style={opacity=0},
ymax=8.5,
ymin=1,
x=0.5cm,
height=4.2cm
]
\addplot
[pattern=crosshatch dots,pattern color=blue]
 coordinates {
(0.05,8.24279476)
(0.1,5.878192534)
(0.15,4.577572406)
(0.2,4.418402096)
(0.25,3.57680957)
(0.3,2.980804773)
};
\addplot
[pattern=north east lines,pattern color=Emerald]
 coordinates {
(0.05,7.04148471615721)
(0.1,5.01244269810085)
(0.15,3.88703730655412)
(0.2,3.75158262388125)
(0.25,3.12009080590238)
(0.3,2.63488321327221)

};
\addplot
[pattern=my grid,pattern color=Goldenrod]
 coordinates {
(0.05,6.98689956331878)
(0.1,4.75554827912392)
(0.15,3.60968806093242)
(0.2,3.43662955686531)
(0.25,2.82986117174539)
(0.3,2.40459870479517)

};
\addplot
[pattern=north west lines,pattern color=Fuchsia]
 coordinates {
(0.05,6.07132459970888)
(0.1,4.18904169395329)
(0.15,3.18672682287877)
(0.2,3.06967910936477)
(0.25,2.56166943158998)
(0.3,2.17699192316088)

};
\legend{PRUB+IF, PRUB+N, PRUB+F, PRUB+P}
\end{axis}
\end{tikzpicture}
}%
\subfigure[][\textit{facebook}(M)]{
\centering
\label{Fig:3FbM}
\begin{tikzpicture}[-]
\scriptsize
\begin{axis}[
ybar=0pt,
enlarge x limits=0.12,
bar width=0.085cm,
legend cell align=left,
legend image code/.code={\draw[#1] (0cm,-0.1cm) rectangle (0.15cm,0.1cm);}, 
legend style={area legend, at={(1,1.02), font=\tiny},anchor=north east,legend columns=1, draw=none, fill=none},
ylabel style={text width=4.2cm, align=center, yshift=-0.7cm},
ylabel=Max. revenue\\(the ratio to NoSocial),
symbolic x coords={0.05, 0.1, 0.15, 0.2, 0.25, 0.3},
xlabel style={align=center},
xlabel={$\frac{n}{|V|}$},
xtick=data,
x tick style={opacity=0},
ymax=8.5,
ymin=1,
x=0.5cm,
height=4.2cm
]
\addplot
[pattern=crosshatch dots,pattern color=blue]
 coordinates {
(0.05,6.182096069869)
(0.1,4.45241213708797)
(0.15,3.45888516955319)
(0.2,2.78181619733683)
(0.25,2.26405745219593)
(0.3,1.88679691479299)

};
\addplot
[pattern=north east lines,pattern color=Emerald]
 coordinates {
(0.05,5.31342794759825)
(0.1,3.80833879065706)
(0.15,2.93063236792316)
(0.2,2.37195754202139)
(0.25,1.96789050903693)
(0.3,1.67054136651386)

};
\addplot
[pattern=my grid,pattern color=Goldenrod]
 coordinates {
(0.05,5.24377729257642)
(0.1,3.60347085789129)
(0.15,2.72977004802794)
(0.2,2.16715782580223)
(0.25,1.79736750196455)
(0.3,1.53007894928327)

};
\addplot
[pattern=north west lines,pattern color=Fuchsia]
 coordinates {
(0.05,4.57893013100437)
(0.1,3.18271119842829)
(0.15,2.40931087177994)
(0.2,1.94376227897839)
(0.25,1.61093163363311)
(0.3,1.37435421669213)

};
\legend{PRUB+IF, PRUB+N, PRUB+F, PRUB+P}
\end{axis}
\end{tikzpicture}
}%
\vspace{-4mm}
\caption{The advantage of each part of PRUB+IF.}
\vspace{-5mm}
\label{Fig:advantage}
\end{figure*}

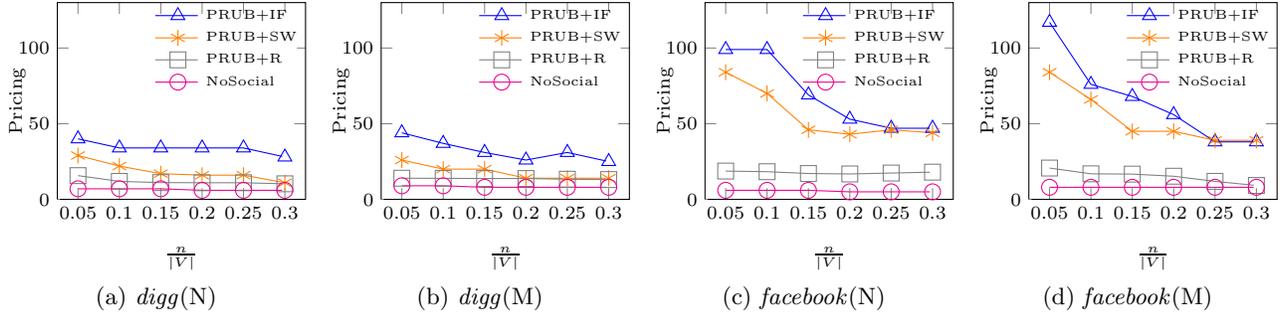
\begin{figure*}[t]
\vspace{-5mm}
\centering
\subfigure[][\textit{digg}(N)]{
\centering
\label{Fig:pDiggN}
\begin{tikzpicture}[-]
\scriptsize
\begin{axis}[
legend cell align=left,
legend style={at={(1.02,1.02), font=\tiny},anchor=north east,legend columns=1, draw=none, fill=none},
ylabel style={text width=4.2cm, align=center, yshift=-0.7cm},
ylabel=Pricing,
symbolic x coords={0.05, 0.1, 0.15, 0.2, 0.25, 0.3},
xlabel style={align=center},
xlabel={$\frac{n}{|V|}$},
xtick=data,
ymax=130,
ymin=0,
x=0.55cm,
height=4.2cm
]
\addplot
[color=blue, mark=triangle, mark options={solid, scale=1.5}]
 coordinates {
(0.05, 40)
(0.1, 34)
(0.15, 34)
(0.2, 34)
(0.25, 34)
(0.3, 28)
};
\addplot
[color=orange, mark=asterisk, mark options={solid, scale=1.5}]
 coordinates {
(0.05, 29)
(0.1, 22)
(0.15, 17)
(0.2, 16)
(0.25, 16)
(0.3, 11)
};
\addplot
[color=gray, mark=square, mark options={solid, scale=1.5}]
 coordinates {
(0.05, 15.7)
(0.1, 11.9)
(0.15, 11.1)
(0.2, 11)
(0.25, 11)
(0.3, 10.3)
};
\addplot
[color=magenta, mark=o, mark options={solid, scale=1.5}]
 coordinates {
(0.05, 7)
(0.1, 7)
(0.15, 7)
(0.2, 6)
(0.25, 6)
(0.3, 6)
};
\legend{PRUB+IF, PRUB+SW, PRUB+R, NoSocial}
\end{axis}
\end{tikzpicture}
}%
\subfigure[][\textit{digg}(M)]{
\centering
\label{Fig:pDiggM}
\begin{tikzpicture}[-]
\scriptsize
\begin{axis}[
legend cell align=left,
legend style={at={(1.02,1.02), font=\tiny},anchor=north east,legend columns=1, draw=none, fill=none},
ylabel style={text width=4.2cm, align=center, yshift=-0.7cm},
ylabel=Pricing,
symbolic x coords={0.05, 0.1, 0.15, 0.2, 0.25, 0.3},
xlabel style={align=center},
xlabel={$\frac{n}{|V|}$},
xtick=data,
ymax=130,
ymin=0,
x=0.55cm,
height=4.2cm
]
\addplot
[color=blue, mark=triangle, mark options={solid, scale=1.5}]
 coordinates {
(0.05, 44)
(0.1, 37)
(0.15, 31)
(0.2, 26)
(0.25, 31)
(0.3, 25)
};
\addplot
[color=orange, mark=asterisk, mark options={solid, scale=1.5}]
 coordinates {
(0.05, 26)
(0.1, 20)
(0.15, 20)
(0.2, 14)
(0.25, 14)
(0.3, 14)
};
\addplot
[color=gray, mark=square, mark options={solid, scale=1.5}]
 coordinates {
(0.05, 14)
(0.1, 13.9)
(0.15, 13.8)
(0.2, 13.8)
(0.25, 13.3)
(0.3, 13.1)
};
\addplot
[color=magenta, mark=o, mark options={solid, scale=1.5}]
 coordinates {
(0.05, 9)
(0.1, 9)
(0.15, 8)
(0.2, 8)
(0.25, 8)
(0.3, 8)
};
\legend{PRUB+IF, PRUB+SW, PRUB+R, NoSocial}
\end{axis}
\end{tikzpicture}
}%
\subfigure[][\textit{facebook}(N)]{
\centering
\label{Fig:pFbN}
\begin{tikzpicture}[-]
\scriptsize
\begin{axis}[
legend cell align=left,
legend style={at={(1.02,1.02), font=\tiny},anchor=north east,legend columns=1, draw=none, fill=none},
ylabel style={text width=4.2cm, align=center, yshift=-0.7cm},
ylabel=Pricing,
symbolic x coords={0.05, 0.1, 0.15, 0.2, 0.25, 0.3},
xlabel style={align=center},
xlabel={$\frac{n}{|V|}$},
xtick=data,
ymax=130,
ymin=0,
x=0.55cm,
height=4.2cm
]
\addplot
[color=blue, mark=triangle, mark options={solid, scale=1.5}]
 coordinates {
(0.05,99)
(0.1,99)
(0.15,69)
(0.2,53)
(0.25,47)
(0.3,47)
};
\addplot
[color=orange, mark=asterisk, mark options={solid, scale=1.5}]
 coordinates {
(0.05,84)
(0.1,70)
(0.15,46)
(0.2,43)
(0.25,46)
(0.3,44)
};
\addplot
[color=gray, mark=square, mark options={solid, scale=1.5}]
 coordinates {
(0.05,18.7)
(0.1,18.3)
(0.15,17.1)
(0.2,16.8)
(0.25,17.4)
(0.3,18)
};
\addplot
[color=magenta, mark=o, mark options={solid, scale=1.5}]
 coordinates {
(0.05,6)
(0.1,6)
(0.15,6)
(0.2,5)
(0.25,5)
(0.3,5)
};
\legend{PRUB+IF, PRUB+SW, PRUB+R, NoSocial}
\end{axis}
\end{tikzpicture}
}%
\subfigure[][\textit{facebook}(M)]{
\centering
\label{Fig:pFbM}
\begin{tikzpicture}[-]
\scriptsize
\begin{axis}[
legend cell align=left,
legend style={at={(1.02,1.02), font=\tiny},anchor=north east,legend columns=1, draw=none, fill=none},
ylabel style={text width=4.2cm, align=center, yshift=-0.7cm},
ylabel=Pricing,
symbolic x coords={0.05, 0.1, 0.15, 0.2, 0.25, 0.3},
xlabel style={align=center},
xlabel={$\frac{n}{|V|}$},
xtick=data,
ymax=130,
ymin=0,
x=0.55cm,
height=4.2cm
]
\addplot
[color=blue, mark=triangle, mark options={solid, scale=1.5}]
 coordinates {
(0.05,117)
(0.1,76)
(0.15,68)
(0.2,56)
(0.25,38)
(0.3,38)
};
\addplot
[color=orange, mark=asterisk, mark options={solid, scale=1.5}]
 coordinates {
(0.05,84)
(0.1,66)
(0.15,45)
(0.2,45)
(0.25,39)
(0.3,39)
};
\addplot
[color=gray, mark=square, mark options={solid, scale=1.5}]
 coordinates {
(0.05,20.7)
(0.1,16.9)
(0.15,16.7)
(0.2,15.3)
(0.25,11.9)
(0.3,9.2)
};
\addplot
[color=magenta, mark=o, mark options={solid, scale=1.5}]
 coordinates {
(0.05,8)
(0.1,8)
(0.15,8)
(0.2,8)
(0.25,8)
(0.3,8)
};
\legend{PRUB+IF, PRUB+SW, PRUB+R, NoSocial}
\end{axis}
\end{tikzpicture}
}%
\vspace{-4mm}
\caption{The pricing comparison.}
\vspace{-5mm}
\label{Fig:pricing}
\end{figure*}

\textbf{Effectiveness.}
Here we use the maximum revenue obtained without consideration of social influences (NoSocial), i.e., the adoption of an individual only depends on his/her inherent valuation, as the baseline, and show the effectiveness of the heuristics in ratios to the baseline. 
Figure~\ref{Fig:effectiveness} reports the effectiveness
with respect to different ratios of the quantity of commodities to the number of population, i.e., $\frac{n}{|V|}$. 
For these results, we have two
observations as follows.
First, on \textit{digg} and \textit{facebook} with both Normal and M-shape valuation distributions, PRUB+IF is the best, and PRUB+SW outperforms PRUB+R only in
the cases of small $\frac{n}{|V|}$. 
The reason is that, to estimate the advantage of selecting an individual in the promotion,  
PRUB+IF introduces the pricing-sensitive importance,
which considers the normalized effects of the influence propagated through the social network, while  
PRUB+SW only takes the sum of weights on one's direct out-neighbors. RRUB+R picks the seeds blindly. 
Second, as $\frac{n}{|V|}$ decreases, the performance of
PRUB+IF becomes even more noticeable (compared to
PRUB+SW and PRUB+R). This is because the number of potential buyers
reduces when the pricing of the commodity increases due to scarcity.
The less the number of potential buyers is, the more
challenging the seed selection is.
Hence, the results demonstrate that the strategy of PRUB+IF for the seed
selection outperforms those of PRUB+SW and PRUB+R.

In order to demonstrate the advantage of each part of PRUB+IF, Figure~\ref{Fig:advantage} shows the effectiveness of PRUB+IF, PRUB+N, PRUB+F, and PRUB+P with respect to different ratios of the quantity of commodities to the number of population, i.e., $\frac{n}{|V|}$. Accordingly, the three parts of PRUB+IF are more or less all equally important.

\textbf{Pricing.}
In addition to the demonstration of effectiveness by maximum
revenue, whether or not the suggested pricing complies with the common sense is
also an interesting question. For answering this question, we
plotted the trend of the pricing suggested by PRUB+IF, PRUB+SW, PRUB+R, and NoSocial with respect to different ratios of the quantity of commodities to the number of population, i.e., $\frac{n}{|V|}$, in Figure~\ref{Fig:pricing}. Accordingly, 
PRUB+IF suggests higher pricing in most cases than the other approaches to gain higher maximum revenue (as shown in Figure~\ref{Fig:effectiveness}) in all cases. This implies that PRUB+IF performs the most elegant strategy for seed selection. 
The smart utilization of word-of-mouth effects can help the promotion of commodities in higher pricing, which complies with the common sense. 
Furthermore, for all methods, 
there is a upward trend in the pricing as $\frac{n}{|V|}$ decreases. 
The less the quantity is, the more precious the commodity could be, which also complies with the common sense. 
There are some exceptions to this pricing trend: PRUB+IF at 0.25 on \textit{digg}(M), and both PRUB+SW and PRUB+R at 0.25 or 0.3 on \textit{facebook}(N).
The reason is that a quantity of supply is used as freebies. This may allow higher pricing in some cases by leaving less quantities for sale. 
The tiny example in Figure~\ref{Fig:exP} explains such a case in more details. ($F(x)=x$ is used as the concave influence function.)
For the case of the quantity of the concert tickets as 2, the maximum revenue is $R(2,\$3,\emptyset)=\$6$ (i.e., $b$ and $c$ adopt the
concert tickets).
Now consider that the quantity of the concert tickets is increased to 3. The maximum revenue
is $R(3,\$7,\{a,c\})=\$7$ (i.e., $b$ adopts the concert ticket under the impact of $a$ and $c$). The
pricing is higher even though the quantity increases. This is because
2 out of 3 concert tickets are used as free tickets and only 1 concert ticket is left
for sale.
Hence, according to these results, PRUB+IF is capable of suggesting elegant marketing strategies.

\textbf{Efficiency.}
Now we compare the efficiency of the three heuristics.
Figure~\ref{Fig:runtime} shows the runtime with respect to different ratios of the quantity of commodities to the number of population, i.e., $\frac{n}{|V|}$, on \textit{digg}(N),
\textit{digg}(M), \textit{facebook}(N), and \textit{facebook}(M).
PRUB+IF spends more execution time than PRUB+SW and PRUB+R in all cases, since PRUB+IF considers the influence propagation in the derivation of pricing-sensitive importance for seed selection. PRUB+R generally spends a little bit more time than PRUB+SW due to the blind seed selection (that makes PRUB+R need to search more prices for finding maximum revenue).  
Nevertheless, the runtime of all the three heuristics is not sensitive to $\frac{n}{|V|}$. 
For PRUB+IF, other factors such as the local structure of potential buyers and the influence cascades will also affect the runtime.

\begin{figure}[t]
\vspace{1mm}
\centering
\begin{tikzpicture}
    \node[main node] (a) {$a(\$1)$};
    \node[main node] (b) [right=1cm of a] {$b(\$3)$};
    \node[main node] (c) [right=1cm of b] {$c(\$3)$};

    \path
    	(a) [->] edge node {2} (b)
        (c) [->] edge node {2} (b);
\end{tikzpicture}
\vspace{-3mm}
\caption{An example for the exception to the pricing trend.}
\label{Fig:exP}
\end{figure}
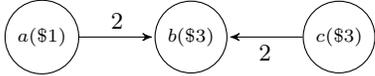

\begin{figure*}[t]
\vspace{-5mm}
\centering
\subfigure[][\textit{digg}(N)]{
\centering
\label{Fig:tDiggN}
\begin{tikzpicture}[-]
\scriptsize
\begin{semilogyaxis}[
legend cell align=left,
legend style={at={(1.02,1.02), font=\tiny},anchor=north east,legend columns=1, draw=none, fill=none},
ylabel style={text width=4.2cm, align=center, yshift=-0.6cm},
ylabel=Time (minute),
symbolic x coords={0.05, 0.1, 0.15, 0.2, 0.25, 0.3},
xlabel style={align=center},
xlabel={$\frac{n}{|V|}$},
xtick=data,
ymax=900000,
ymin=1,
x=0.57cm,
height=4.2cm
]
\addplot
[color=blue, mark=triangle, mark options={solid, scale=1.5}]
 coordinates {
(0.05, 888.9666667)
(0.1, 1691.983333)
(0.15, 1442.05)
(0.2, 1225.183333)
(0.25, 977.2333333)
(0.3, 946.85)
};
\addplot
[color=orange, mark=asterisk, mark options={solid, scale=1.5}]
 coordinates {
(0.05, 6.15)
(0.1, 7.666666667)
(0.15, 6.45)
(0.2, 7.7)
(0.25, 6.3)
(0.3, 7.616666667)
};
\addplot
[color=gray, mark=square, mark options={solid, scale=1.5}]
 coordinates {
(0.05, 6.453333333)
(0.1, 9.848333333)
(0.15, 7.68)
(0.2, 11.015)
(0.25, 7.141666667)
(0.3, 13.07666667)
};
\legend{PRUB+IF, PRUB+SW, PRUB+R}
\end{semilogyaxis}
\end{tikzpicture}
}%
\subfigure[][\textit{digg}(M)]{
\centering
\label{Fig:tDiggM}
\begin{tikzpicture}[-]
\scriptsize
\begin{semilogyaxis}[
legend cell align=left,
legend style={at={(1.02,1.02), font=\tiny},anchor=north east,legend columns=1, draw=none, fill=none},
ylabel style={text width=4.2cm, align=center, yshift=-0.6cm},
ylabel=Time (minute),
symbolic x coords={0.05, 0.1, 0.15, 0.2, 0.25, 0.3},
xlabel style={align=center},
xlabel={$\frac{n}{|V|}$},
xtick=data,
ymax=900000,
ymin=1,
x=0.57cm,
height=4.2cm
]
\addplot
[color=blue, mark=triangle, mark options={solid, scale=1.5}]
 coordinates {
(0.05, 1758.9)
(0.1, 1274.466667)
(0.15, 813.8166667)
(0.2, 897.15)
(0.25, 466.6833333)
(0.3, 354.9333333)
};
\addplot
[color=orange, mark=asterisk, mark options={solid, scale=1.5}]
 coordinates {
(0.05, 6.033333333)
(0.1, 7.5)
(0.15, 6.066666667)
(0.2, 7.35)
(0.25, 7.166666667)
(0.3, 6.683333333)
};
\addplot
[color=gray, mark=square, mark options={solid, scale=1.5}]
 coordinates {
(0.05, 6.158333333)
(0.1, 9.143333333)
(0.15, 6.671666667)
(0.2, 9.92)
(0.25, 9.931666667)
(0.3, 11.27)
};
\legend{PRUB+IF, PRUB+SW, PRUB+R}
\end{semilogyaxis}
\end{tikzpicture}
}%
\subfigure[][\textit{facebook}(N)]{
\centering
\label{Fig:tFbN}
\begin{tikzpicture}[-]
\scriptsize
\begin{semilogyaxis}[
legend cell align=left,
legend style={at={(1.02,1.02), font=\tiny},anchor=north east,legend columns=1, draw=none, fill=none},
ylabel style={text width=4.2cm, align=center, yshift=-0.6cm},
ylabel=Time (minute),
symbolic x coords={0.05, 0.1, 0.15, 0.2, 0.25, 0.3},
xlabel style={align=center},
xlabel={$\frac{n}{|V|}$},
xtick=data,
ymax=900000,
ymin=1,
x=0.57cm,
height=4.2cm
]
\addplot
[color=blue, mark=triangle, mark options={solid, scale=1.5}]
 coordinates {
(0.05,5949.983333)
(0.1,7454.3)
(0.15,2222.05)
(0.2,1440.9)
(0.25,1686.45)
(0.3,1826.216667)
};
\addplot
[color=orange, mark=asterisk, mark options={solid, scale=1.5}]
 coordinates {
(0.05,26.83333333)
(0.1,27.93333333)
(0.15,29.55)
(0.2,31.35)
(0.25,32.93333333)
(0.3,34.86666667)
};
\addplot
[color=gray, mark=square, mark options={solid, scale=1.5}]
 coordinates {
(0.05,39.36833333)
(0.1,47.925)
(0.15,52.51666667)
(0.2,59.10666667)
(0.25,64.87)
(0.3,70.15)
};
\legend{PRUB+IF, PRUB+SW, PRUB+R}
\end{semilogyaxis}
\end{tikzpicture}
}%
\subfigure[][\textit{facebook}(M)]{
\centering
\label{Fig:tFbM}
\begin{tikzpicture}[-]
\scriptsize
\begin{semilogyaxis}[
legend cell align=left,
legend style={at={(1.02, 1.02), font=\tiny},anchor=north east,legend columns=1, draw=none, fill=none},
ylabel style={text width=4.2cm, align=center, yshift=-0.6cm},
ylabel=Time (minute),
symbolic x coords={0.05, 0.1, 0.15, 0.2, 0.25, 0.3},
xlabel style={align=center},
xlabel={$\frac{n}{|V|}$},
xtick=data,
ymax=900000,
ymin=1,
x=0.57cm,
height=4.2cm
]
\addplot
[color=blue, mark=triangle, mark options={solid, scale=1.5}]
 coordinates {
(0.05,2587.55)
(0.1,3719.333333)
(0.15,6395.5)
(0.2,1584.016667)
(0.25,1458.666667)
(0.3,1461.65)
};
\addplot
[color=orange, mark=asterisk, mark options={solid, scale=1.5}]
 coordinates {
(0.05,26.85)
(0.1,27.93333333)
(0.15,29.55)
(0.2,31.35)
(0.25,32.91666667)
(0.3,34.58333333)
};
\addplot
[color=gray, mark=square, mark options={solid, scale=1.5}]
 coordinates {
(0.05,39.52833333)
(0.1,48.00833333)
(0.15,52.405)
(0.2,58.29166667)
(0.25,62.54)
(0.3,64.29666667)
};
\legend{PRUB+IF, PRUB+SW, PRUB+R}
\end{semilogyaxis}
\end{tikzpicture}
}%
\vspace{-4mm}
\caption{The runtime comparison.}
\vspace{-5mm}
\label{Fig:runtime}
\end{figure*}
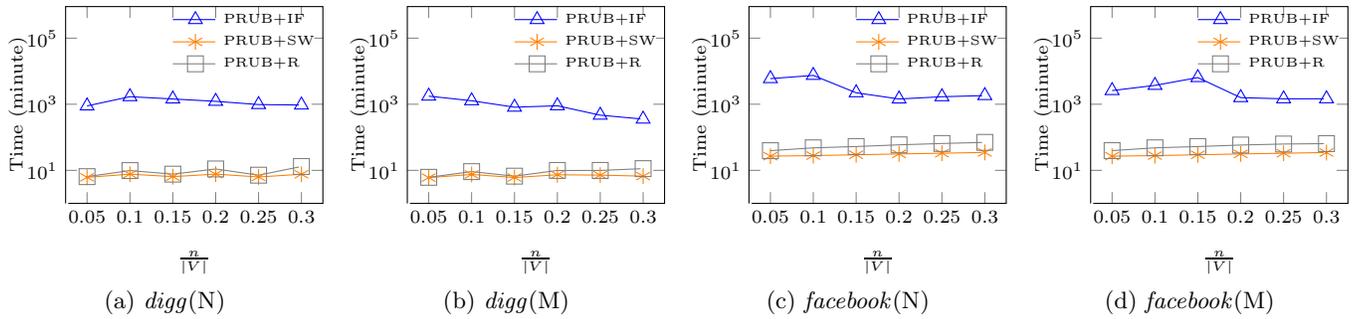

\section{Related works}
\textbf{Influence Maximization.} The influence maximization problem
aims at identifying a group of seeds so that the number of the
active people is the largest \cite{Chen2010, Domingos2001, Kempe2003, Singer2012}. To the best of our
knowledge, the first study could be traced to the work \cite{Domingos2001},
where Domingos et al.\ studied the influence maximization problem as an
algorithmic problem. Then, Kempe et al.\ \cite{Kempe2003} formally
modeled the problem as a discrete optimization, which is to identify
$k$ seeds for maximizing the influence spread over the social network.
Meanwhile, Kempe et al.\ also showed that the influence maximization
problem is NP-hard under both the Linear Threshold (LT) and the Independent
Cascade (IC) models, and proposed a greedy strategy with approximation
guarantees.
Specifically, in the LT model, the influences on an individual are
cumulative, and each individual has his/her own activation threshold.
An individual becomes active if the sum of weights from his/her active
neighbors is larger than or equal to his/her activation threshold.
Then, this active individual will also propagate influences to his/her
inactive neighbors. In the IC model,
each influence an individual receives is independent. For each
influence, there is one and only one chance to activate the individual
with a probability.
If the individual becomes active, he/she futher spreads his/her
influence in the same mannar.
In the work \cite{Chen2010}, Chen et al.\ showed that, given seeds, computing
the influence spread under the LT model is \#P-hard, and provided a
scalable heuristic based on the fast computation for directed acyclic
graphs (DAGs).
Singer \cite{Singer2012} concerned the inherent costs of making
individuals as seeds and followed the incentive compatible mechanism
to extract the true information about individuals' costs for
identifying and rewarding the seeds.
As mentioned in Section~\ref{Sec:Intro}, all these influence
maximization problems do not take the monetary effects on people's purchase decisions into account,
and thus differ from our problem.

\textbf{Revenue Maximization.} The problem of revenue maximization is
to derive the marketing strategy that brings the optimal revenue under
the social influence, and was first addressed by
Hartline et al.\ \cite{Hartline2008}.
As a solution, Hartline et al.\ \cite{Hartline2008} proposed a marketing strategy, named
influence-and-exploit (IE), to tackle the problem in two steps. In the
influence step, IE identifies a group
of customers as seeds. After that, in the exploit step, IE visits the remaining
customers in a random order and offers each of them the optimal
(myopic) pricing of the product in order to maximize the revenue.
The decision whether or not an individual purchases the product
depends on the pricing offered and the individual's valuation, which
takes the social influences into consideration. Hartline et al.\
\cite{Hartline2008} also discussed the inferences of the valuations
from social influences by Uniform Additive Model, Symmetric Model,
and Concave Graph Model. Later, Fotakis et al.\ \cite{Fotakis2012}
proposed a polynomial-time approach to approximate the maximum
revenue in the Uniform Additive Model.
Another work \cite{Lu2012} distinguishes
between activation and adoption. When an individual is active by the
social influence, he/she will then determine whether or not to adopt
the product with respect to the
pricing and his/her valuation.
The marketing strategy of all these
works is to determine customized pricing for different customers.

Mirrokni et al.\ \cite{Mirrokni2012} argued that offering the
product at fixed pricing is more reasonable since
offering customized pricing is hard to implement in the real world.
Hence, their objective is to discover
a seed group with the fixed pricing to maximize the revenue.
Mirrokni et al.\ \cite{Mirrokni2012} adopted the Concave Graph Model proposed in the work
\cite{Hartline2008} to incorporate the social influence into the
valuation, and presented a (1/2)-approximation based on the randomized linear time approximation algorithm for the Unconstrained Submodular Maximization problem.

However, all the works above assumed that there is an unlimited supply of commodities, whereas we consider the scenario with limited supply such as limited edition commodities and concert tickets.
As shown in Sections~\ref{Sec:Intro} and \ref{Sec:exp1},
it is unapplicable for these works to determine the proper marketing
strategies.

\vspace{-2mm}
\section{Conclusion}
In this work, we addressed the revenue maximization problem with a quantity constraint on a monetizing social network. 
To maximize the revenue, the marketing strategy is to determine the pricing of the commodity and find a seed group of individuals as the initial customers.
Hence, we provided Algorithm PRUB to derive the pricing and the seed group for the optimal revenue.
For better efficiency, we further proposed Algorithm PRUB+IF as a heuristic algorithm based on the framework of PRUB for a feasible solution.
The experiments on real datasets with different valuation distributions showed the effectiveness of PRUB and PRUB+IF.

\vspace{-2mm}

\end{document}